\begin{document}

\title{\Large Distribution-Free Predictive Inference under Unknown Temporal Drift}\blfootnote{Author names are sorted alphabetically.}

\author{Elise Han\thanks{Department of Computer Science, Columbia University. Email: \texttt{lh3117@columbia.edu}.}
	\and Chengpiao Huang\thanks{Department of IEOR, Columbia University. Email: \texttt{chengpiao.huang@columbia.edu}.}
	\and Kaizheng Wang\thanks{Department of IEOR and Data Science Institute, Columbia University. Email: \texttt{kaizheng.wang@columbia.edu}.}
}

\date{This version: June 2024}

\maketitle

\begin{abstract}
Distribution-free prediction sets play a pivotal role in uncertainty quantification for complex statistical models. Their validity hinges on reliable calibration data, which may not be readily available as real-world environments often undergo unknown changes over time. In this paper, we propose a strategy for choosing an adaptive window and use the data therein to construct prediction sets. The window is selected by optimizing an estimated bias-variance trade-off. We provide sharp coverage guarantees for our method, showing its adaptivity to the underlying temporal drift. We also illustrate its efficacy through numerical experiments on synthetic and real data.
\end{abstract}
\noindent{\bf Keywords:} Prediction set, distribution-free, conformal inference, temporal distribution shift, adaptivity.

\section{Introduction}\label{sec-intro}

Practitioners increasingly deploy sophisticated prediction models such as deep neural networks into real-world applications. Due to their complex structures, these models are generally accessed as black boxes. To assess their reliability and safeguard against potential errors, it is important to quantify the uncertainty in their outputs. \emph{Predictive inference} is a popular methodology for this purpose. It takes as input a prediction algorithm and \emph{calibration data}, and outputs a \emph{prediction set} that contains the true outcome with a prescribed probability. The validity of the prediction set hinges on the assumption that the calibration data truthfully represents the underlying environment. However, this assumption is frequently violated in practice, where the data distribution may drift over time. Integrating data from both current and historical periods to construct faithful prediction sets remains a significant challenge. Despite a large body of literature on learning under distribution drift over the past two decades \citep{HSe09,MMM12,BGZ15,HKY15,MUp23,HWa23}, statistical inference within this context is much less explored.

\paragraph{Main contributions.} In this paper, we study predictive inference under temporal distribution drift via quantile estimation. We develop an adaptive rolling window strategy to estimate the population quantile of a drifting distribution. This strategy in turn leads to an algorithm for building distribution-free prediction sets that wraps around any black-box point prediction algorithm. We further provide theoretical guarantees and numerical experiments to show that our algorithm adapts to unknown temporal drifts.

\paragraph{Related works.} One of the most powerful tools for distribution-free predictive inference is \emph{conformal prediction}, also called \emph{conformal inference} \citep{PPV02,VGS05,SVo08,LGR18,ABa23}. Under the assumption of data exchangeability, it takes a black-box point prediction model and uses the empirical quantile of \emph{conformity scores} to construct distribution-free prediction sets. Our method shares the same distribution-free and model-free features. When the data distribution undergoes temporal drift, the exchangeability assumption no longer holds. A number of works have considered extending the methodology to handle distribution shifts. Most of them assume that the calibration data are independent samples from a fixed source distribution \citep{TFC19,LCa21,PRa21,FBA22,PLS22,PDL22,QDT23,YKT24,SPL24,PXi24} or from a fixed number of source distributions \citep{BBa24,YGK24}, and consider only covariate shift or label shift. Their methods do not apply when the distribution keeps drifting over time.

Several recent works have studied conformal prediction under temporal drift. \cite{BCR23} proposes a general data weighting scheme but does not provide a principled approach for choosing the weights. \cite{LTS22,XXi23,LMa24} consider time series data with temporal dependence, but typically impose certain stationarity assumptions on the data. \cite{GCa21,GCa24,ZFG22,ACT23,ABB24,YCL24} perform online updates on some hyperparameters such as the miscoverage level and the quantile tracker, and provide long-term coverage guarantees. In contrast, our work focuses on achieving good coverage at a specific point in time, by utilizing \emph{offline} calibration data collected from the past. It can be used as a sub-routine for online predictive inference, and its coverage guarantees at individual time points translate to long-term coverage guarantees.

A widely used approach for adapting to temporal drift is \emph{rolling window} \citep{BGa07,HKY15,MMM12,MUp23,HWa23}. Our method adaptively selects a look-back window by optimizing an estimated bias-variance trade-off. It is inspired by the \emph{Goldenshluger-Lepski method} \citep{GLe08} for bandwidth selection in non-parametric estimation. The latter has been adapted for mean estimation under temporal drift in \cite{HHW24}. Compared with that work, we consider the more challenging problem of quantile estimation where the estimation error is measured by a nonconvex loss. Additionally, our approach is distribution-free, and does not make any boundedness assumption.

The methodology of our work shares similarities with but is different from conformal prediction. In particular, our work is based on estimating the population quantile, while conformal prediction utilizes properties of the empirical quantile under exchangeability. Our approach allows us to derive a stronger coverage guarantee called \emph{probably approximately correct (PAC)} or \emph{training-conditional coverage} \citep{Vov12,KJL20,BAL21,PSI21,BBa23,YAK24,LMa24,PXi24}. It requires that conditioned on the training and calibration data, the constructed prediction set has good coverage with high probability. We note that \cite{PDL22,QDT23,YKT24,SPL24} develop methods for such guarantees under covariate shift or label shift, but they assume that the calibration data comes from the same source distribution, and thus do not apply to drifting data.

\paragraph{Notation.} Let $\ZZ_+=\{1,2,...\}$ be the set of positive integers. For $n\in\ZZ_+$, let $[n]=\{1,2,...,n\}$. For $x\in\RR$, define $x_+ = \max\{x,0\}$. For non-negative sequences $\{a_n\}_{n=1}^{\infty}$ and $\{b_n\}_{n=1}^{\infty}$, we write $a_n=\cO(b_n)$ if there exists $C>0$ such that for all $n\in\ZZ_+$, $a_n \leq C b_n$. Unless otherwise stated, $a_n\lesssim b_n$ also represents $a_n=\cO(b_n)$. We write $a_n\asymp b_n$ if both $a_n\lesssim b_n$ and $b_n\lesssim a_n$ hold. For $x\in\RR$, we use $\delta_x$ to denote the Dirac measure at $x$. The total variation distance between two probability distributions $\distP$ and $\distQ$ is denoted by $\TV(\distP,\distQ)$. We use $\Unif(a,b)$ to denote the uniform distribution over the interval $[a,b]$. For an event $A$, we write $\ind(A)$ as its binary indicator.

\paragraph{Outline.} The rest of the paper is organized as follows. \Cref{sec-setup} formally describes the problem setup. \Cref{sec-methods} introduces our rolling window strategy for quantile estimation and predictive inference. \Cref{sec-theory} presents the theoretical guarantees. \Cref{sec-experiments} demonstrates the strong performance of our algorithms on synthetic and real data. \Cref{sec-discussions} concludes the paper and discusses future directions.

\section{Problem setup}\label{sec-setup}

In this section, we formally state the problem of predictive inference under temporal drift and then reduce it to quantile estimation.

\subsection{Predictive inference}

Let $\cX$ and $\cY$ be covariate and response spaces, respectively. At each time $t\in\ZZ_+$, we are given a predictive model $\widehat\mu_t:~ \cX \to \cY$ designed for the current probability distribution $\distP_t$ over $\cX\times\cY$. As the environment evolves over time, $\{ \distP_j \}_{j=1}^{t}$ can be different. For a new test point $(\bx_t,y_t)$ drawn from $\distP_t$, the model $\widehat\mu_t$ outputs a \emph{point prediction} $\widehat\mu_t(\bx_t)$ of the true label $y_t$. However, the model could make errors and the new data is random. To quantify the uncertainty of the prediction, we collect a batch of $B_t \geq 1$ i.i.d.~\emph{calibration data} $\datasetB_t = \{ (\bx_{t,i},y_{t,i})  \}_{i=1}^{B_t}$ from $\distP_t$. Based on the model $\widehat\mu_t$ and all the historical calibration data $\{ \datasetB_j \}_{j=1}^t$, we wish to construct a \emph{prediction set} that covers $y_t$ with a prescribed probability. This motivates the following problem:

\begin{problem}[Predictive inference]\label{prob-predictive-inference}
Given a constant $\alpha\in(0,1)$, a model $\widehat\mu_t$ and calibration data $\{\datasetB_j\}_{j=1}^t$, construct a set-valued mapping $\Chat_t:~ \cX \to 2^\cY$ such that for a new data point $(\bx_t,y_t)\sim \distP_t$,
\begin{align}
\PP\Big( y_t \in \Chat_t(\bx_t) \Big) \approx 1-\alpha.
\label{eqn-coverage-marginal}
\end{align}
\end{problem}

The probability in \eqref{eqn-coverage-marginal} is with respect to the randomness of $(\bx_t,y_t)$, $\{ \datasetB_j \}_{j=1}^t$, and $\widehat\mu_t$, which is common in the conformal prediction literature \citep{ABa23}. 
In practice, however, the latter two are often given to us and cannot be regenerated. Thus, ideally we would like the coverage probability of $\Chat_t ( \cdot )$ to be approximately $1 - \alpha$ for typical realizations of $\{\datasetB_j\}_{j=1}^t$ and $\widehat\mu_t$. That is, we want
\begin{align}
\PP\Big( y_t \in \Chat_t(\bx_t) \Bigm| \{\datasetB_j\}_{j=1}^t,~ \hat\mu_t \Big) \approx 1-\alpha
\label{eqn-pac-coverage}
\end{align}
to hold with high probability. Such guarantee, known as \emph{training-conditional coverage} or \emph{probably approximately correct (PAC) coverage} \citep{Vov12}, is stronger than \eqref{eqn-coverage-marginal}. We aim to establish \eqref{eqn-pac-coverage} under the following standard assumption.

\begin{assumption}[Independence]\label{assumption-independence}
The calibration datasets $\{ \datasetB_j \}_{j=1}^{t}$ are independent, and the model $\widehat\mu_t$ is trained independently of them.
\end{assumption}

For simplicity, we will treat $\widehat\mu_t$ as deterministic. This can be interpreted as conditioning on the training data and the training algorithm.

\subsection{Reduction to quantile estimation under temporal drift}\label{sec-setup-reduction}

We now present a reduction of \Cref{prob-predictive-inference} to quantile estimation using ideas from split conformal prediction \citep{ABa23}. Suppose we have a \emph{conformity score} function $s:~\cX \times \cY \to \RR$ so that $s(\bx_t, y_t)$ gauges the deviation of $y_t$ from our point prediction $\widehat{\mu}_t (\bx_t)$. Denote by $\cdf_t$ and $\quantile_t$ the cumulative distribution function (CDF) and a $(1 - \alpha)$-quantile of $s(\bx_t,y_t)$, respectively. Then, $s (\bx_t, y_t) \leq \quantile_t $ holds with probability at least $1 - \alpha$. If $\quantilehat_t$ is an estimate of $\quantile_t$ computed from the sample scores $\{ s (  \bx_{j,i} ,  y_{j, i} ) \}_{j\in [t], i \in [B_j]}$, then
\begin{align}
\Chat_t (\bx_t) = \{ y \in \cY :~ s (\bx_t, y) \leq \quantilehat_t \}
\label{eqn-prediction-set}
\end{align}
is a prediction set with coverage probability
\begin{align}
	\PP\big( y_t \in \Chat_t(\bx_t) \bigm| \{\datasetB_j\}_{j=1}^t \big)
	=
	\PP\big( s (\bx_t, y_t ) \leq \quantilehat_t \bigm| \{\datasetB_j\}_{j=1}^t \big)	
	=
	F_t(\quantilehat_t),
	\label{eqn-general-coverage}
\end{align}
conditioned on the calibration data. In light of the requirement \eqref{eqn-pac-coverage}, we want $|F_t(\quantilehat_t) - ( 1 - \alpha ) |$ to be small with high probability over the randomness of $\quantilehat_t$. In contrast, the goal \eqref{eqn-coverage-marginal} in conformal prediction translates to $\EE [ F_t ( \quantilehat_t ) ] \approx 1 - \alpha$.

For regression problems with $\cY = \RR$, a natural choice of $s(\bx, y)$ is the absolute residual $| y - \widehat{\mu}_t (\bx) |$, whose associated prediction set \eqref{eqn-prediction-set} has the form
\begin{align}
[ \widehat{\mu}_t (\bx_t) - \quantilehat_t,~
\widehat{\mu}_t (\bx_t) + \quantilehat_t ].
\label{eqn-prediction-interval}
\end{align}
Suppose that in addition to $\widehat\mu_t$, we also have an estimate $\widehat\sigma_t :~ \cX \to [0, +\infty) $ of $y_t$'s uncertainty given $\bx_t$ (e.g.,~conditional standard deviation or inter-quartile range). Then, we may use a studentized score $s (\bx, y) = | y - \widehat{\mu}_t (\bx) | / \widehat{\sigma}_t(\bx)$ and get a prediction interval
\begin{align*}
[\widehat{\mu}_t ( \bx_t ) - \quantilehat_t \widehat\sigma_t ( \bx_t ) ,~
\widehat{\mu}_t ( \bx_t ) + \quantilehat_t \widehat\sigma_t ( \bx_t )
].
\end{align*}
Compared to the constant-width interval \eqref{eqn-prediction-interval}, the new one can better capture heteroscedasticity.

The above discussion shows that our general goal can be reduced to the following problem.

\begin{problem}[Quantile estimation]\label{prob-quantile-estimation}
Let $\{\distQ_j\}_{j=1}^t$ be probability distributions over $\RR$, and let $\{\dataset_j\}_{j=1}^t$ be independent datasets, where $\dataset_j = \{u_{j,i}\}_{i=1}^{B_j}$ consists of $B_j$ i.i.d.~samples from $\distQ_j$. Given data $\{\dataset_j\}_{j=1}^t$ and a constant $\alpha\in(0,1)$, how to estimate the $(1-\alpha)$-quantile of $\distQ_t$? %In other words, we look for an estimate $\quantilehat_t$ such that $| F_t(\quantilehat_t) - (1-\alpha) |$ is small.
\end{problem}

Given a conformity score function $s$, we only need to solve \Cref{prob-quantile-estimation} with $\distQ_j = \mathrm{Law} ( s( \bx_t, y_t ) )$ and $u_{j,i} = s ( \bx_{j, i}, y_{j,i} )$. This amounts to quantile estimation under unknown temporal distribution shift. Thus, our methodology is fundamentally different from conformal prediction: the former targets the population quantile, while the latter utilizes the empirical quantile and crucially relies on the exchangeability of the samples.

We make the following minimal regularity assumption, which is satisfied when $\cY = \RR^d$, $y_t$ has a continuous conditional distribution given $\bx_t$, and $y\mapsto s(\bx,y)$ is continuous for every $\bx\in\cX$. No other condition, e.g., moment bound, is needed.

\begin{assumption}[Continuity]\label{assumption-AC}
The distributions $\{ \distQ_j \}_{j=1}^t$ have continuous CDFs.
\end{assumption}

\section{Methodology}\label{sec-methods}

In this section, we first present a bias-variance decomposition for the sample quantile under temporal drift. Based on that, we develop a solution to \Cref{prob-quantile-estimation} and its extension to \Cref{prob-predictive-inference}.

\subsection{A bias-variance decomposition for the sample quantile}

To set the stage, we introduce some notations.

\begin{definition}[Left and right quantiles]
Let $F:\RR\to[0,1]$ be a CDF. For $\gamma\in(0,1)$, the \emph{left $\gamma$-quantile} and \emph{right $\gamma$-quantile} of $F$ are defined by, respectively,
\[
\Qleft_{\gamma}(F) = \inf\{x\in\RR: F(x)\ge \gamma\}
\quad\text{and}\quad
\Qright_{\gamma}(F) = \inf\{x\in\RR: F(x) > \gamma\}.
\]
If $\cP$ is the probability distribution associated with $F$, we also write $\Qleft_{\gamma}(\cP)$ and $\Qright_{\gamma}(\cP)$ in place of $\Qleft_{\gamma}(F)$ and $\Qright_{\gamma}(F)$, respectively.
\end{definition}

The left quantile is the usual notion of quantile, while the right quantile plays an important role in our analysis. The relation $\Qleft_{\gamma} \leq \Qright_{\gamma}$ always holds.
\Cref{prob-quantile-estimation} amounts to estimating the left $(1-\alpha)$-quantile of $\distQ_t$, namely, $\Qleft_{1-\alpha} (F_t)$. A natural candidate is the left $(1-\alpha)$-quantile of the data $\{ \dataset_j \}_{j=t - k + 1}^t$ in some appropriately chosen look-back window $k$. This is a plug-in approach: the data defines an empirical CDF
\[
\widehat{F}_{t,k}(x) = \frac{1}{B_{t,k}} \sum_{j=t-k+1}^t \sum_{i=1}^{B_j} \ind\{ x \ge u_{j,i} \}\quad\text{where}\quad B_{t,k} = \sum_{j=t-k+1}^t B_j,
\]
and the estimator is 
\begin{equation}\label{eqn-empirical-quantile}
	\widehat{q}_{t,k} = \Qleft_{1-\alpha}(\widehat{F}_{t,k}).
\end{equation}

For any fixed $k$, the following result characterizes the approximation quality of $\widehat{q}_{t,k}$ through a bias-variance decomposition. Its proof is deferred to \Cref{sec-proof-thm-bias-variance-decomp}.

\begin{theorem}[Bias-variance decomposition]\label{thm-bias-variance-decomp}
Let Assumption \ref{assumption-AC} hold. Fix $k\in[t]$. Choose $\delta\in(0,1)$. Define
\begin{align}
& \phi (t, k) = \max_{t - k + 1 \leq j \leq t} \| F_j - F_t \|_{\infty}, \notag \\[4pt]
& \psi ( t , k, \delta ) =  \frac{5}{4} \sqrt{ \frac{ 2 \alpha ( 1 - \alpha ) \log (2 / \delta) }{ B_{t, k} } } + \frac{4 \log (2 / \delta ) }{ B_{t, k} }.
\label{eqn-psi}
\end{align}
With probability at least $1-\delta$,
\begin{equation}\label{eqn-bias-variance-decomp}
| 
	F_t(
	\widehat{q}_{t, k} 
	)
	- (1 - \alpha) | \leq  \phi (t, k) + \psi ( t , k, \delta ).
\end{equation}
\end{theorem}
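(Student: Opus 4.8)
The plan is to control the quantile estimation error by first passing to a bound on the empirical CDF $\widehat F_{t,k}$ and then converting it into a bound on $F_t(\widehat q_{t,k})$. Observe that $\widehat F_{t,k}$ is an average of $B_{t,k}$ independent Bernoulli-type terms, but the underlying distributions differ across the window because of drift. Decompose its expectation: $\EE \widehat F_{t,k}(x) = \frac{1}{B_{t,k}}\sum_{j=t-k+1}^t B_j F_j(x)$, which lies within $\phi(t,k)$ of $F_t(x)$ uniformly in $x$ by the definition of $\phi(t,k)$ and the triangle inequality. So the first step is to argue $\|\EE\widehat F_{t,k} - F_t\|_\infty \le \phi(t,k)$.

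The second step is a high-probability uniform bound on the fluctuation $\|\widehat F_{t,k} - \EE\widehat F_{t,k}\|_\infty$. Here I would use a Bernstein- or Bennett-type concentration inequality rather than a crude DKW bound, because the stated $\psi(t,k,\delta)$ has the characteristic $\sqrt{\alpha(1-\alpha)\log(1/\delta)/B_{t,k}} + \log(1/\delta)/B_{t,k}$ shape of a variance-aware tail bound. Since we only need to control $\widehat F_{t,k}$ at the single point (or at most two points, the left and right $(1-\alpha)$-quantiles of $F_t$) that matter for the quantile comparison, a pointwise Bernstein inequality suffices and avoids paying a $\log$ for a union over all $x$; the variance at the relevant point is at most $\alpha(1-\alpha)$ up to the drift, which is where the $\frac{5}{4}$ and the $2$ inside the square root come from (absorbing a $\phi$-sized perturbation of the variance and a small slack). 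This is the step I expect to be the main obstacle: getting the constants to line up exactly as in $\psi$, handling the fact that the variance is $F_j(x)(1-F_j(x))$ rather than $\alpha(1-\alpha)$, and making sure the argument works simultaneously at $\Qleft_{1-\alpha}(F_t)$ and $\Qright_{1-\alpha}(F_t)$.

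The third step is the quantile-to-CDF conversion. By definition of the empirical left quantile in \eqref{eqn-empirical-quantile}, $\widehat q_{t,k} = \Qleft_{1-\alpha}(\widehat F_{t,k})$, so $\widehat F_{t,k}(\widehat q_{t,k}) \ge 1-\alpha$ while $\widehat F_{t,k}(x) < 1-\alpha$ for all $x < \widehat q_{t,k}$. Combining the bounds from Steps 1 and 2 at the points $q^- := \Qleft_{1-\alpha}(F_t)$ and $q^+ := \Qright_{1-\alpha}(F_t)$: if $\widehat q_{t,k} < q^-$ then $\widehat F_{t,k}(\widehat q_{t,k}) \ge 1-\alpha$ forces $F_t(\widehat q_{t,k})$ not too far below $1-\alpha$; and since $F_t$ is continuous (Assumption \ref{assumption-AC}) and nondecreasing, $F_t(\widehat q_{t,k}) \le F_t(q^-) = 1-\alpha$ in that regime, giving a two-sided control. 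Symmetrically, if $\widehat q_{t,k} > q^+$ then $F_t(\widehat q_{t,k}) \ge 1-\alpha$ and the upper deviation at $q^+$ bounds it above; and if $q^- \le \widehat q_{t,k} \le q^+$ then $F_t(\widehat q_{t,k}) = 1-\alpha$ exactly. In all cases $|F_t(\widehat q_{t,k}) - (1-\alpha)| \le \phi(t,k) + \psi(t,k,\delta)$ on the good event. Continuity of $F_t$ is what makes the boundary cases clean and removes any need for a moment or boundedness assumption; the right quantile $\Qright$ enters precisely to handle the flat-spot ambiguity in the conversion. Finally I would assemble the pieces, noting the good event has probability at least $1-\delta$ (the Bernstein step is applied once, or twice with a union bound already folded into the constant), which yields \eqref{eqn-bias-variance-decomp}.
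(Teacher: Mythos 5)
Your overall architecture---bounding the bias of $\EE\widehat F_{t,k}$ by $\phi(t,k)$ via the triangle inequality, a pointwise Bernstein bound for the fluctuation, then a monotonicity-plus-continuity argument to convert back to $F_t(\widehat q_{t,k})$---is sound, but Step 3 as written does not close. The problem is your choice of evaluation points. Concentration at $q^-=\Qleft_{1-\alpha}(F_t)$ and $q^+=\Qright_{1-\alpha}(F_t)$ only tells you that $\widehat F_{t,k}(q^\pm)$ is within $\phi+\psi$ of $F_t(q^\pm)=1-\alpha$; it says nothing about the fluctuation of $\widehat F_{t,k}$ at the \emph{random} point $\widehat q_{t,k}$. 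In the regime $\widehat q_{t,k}<q^-$ you assert that $\widehat F_{t,k}(\widehat q_{t,k})\ge 1-\alpha$ ``forces $F_t(\widehat q_{t,k})$ not too far below $1-\alpha$,'' but turning that inequality into a lower bound on $F_t(\widehat q_{t,k})$ requires $\widehat F_{t,k}-\EE\widehat F_{t,k}$ to be small at $\widehat q_{t,k}$ itself, which pointwise concentration at $q^-$ does not provide (and you have explicitly declined to pay for a uniform bound). The fix is the standard self-referential choice of points: fix the target radius $r=\phi(t,k)+\psi(t,k,\delta)$ in advance, apply Bernstein at the deterministic points $x_0=\Qleft_{1-\alpha-r}(F_t)$ and $x_1=\Qright_{1-\alpha+r}(F_t)$, and argue by contradiction. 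If $F_t(\widehat q_{t,k})<1-\alpha-r$, then $\widehat q_{t,k}<x_0$, hence $\widehat F_{t,k}(x_0)\ge\widehat F_{t,k}(\widehat q_{t,k})\ge 1-\alpha$ while $\EE\widehat F_{t,k}(x_0)\le 1-\alpha-r+\phi(t,k)$, contradicting the concentration bound once $r\ge\phi+\psi$; the other side is symmetric. This makes the bound a fixed-point inequality in $r$, which is also where the constants come from: the Bernoulli variance at the shifted points is $(1-\alpha\mp r)(\alpha\pm r)$ rather than $\alpha(1-\alpha)$, and absorbing the discrepancy costs an $\varepsilon r$ slack term (via $\sqrt{q(1-q)+r}\le\sqrt{q(1-q)}+\sqrt{r}$ and AM--GM, as in \Cref{lem-Bernstein}), whence the factor $(1-\varepsilon)^{-1}=5/4$ at $\varepsilon=1/5$.

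For comparison, the paper sidesteps direct manipulation of $\widehat F_{t,k}$: it observes that $F_t(\widehat q_{t,k})$ equals the empirical left quantile of $\{F_t(u_{j,i})\}$, replaces each $F_t(u_{j,i})$ by $F_j(u_{j,i})$ at a deterministic cost of $\phi(t,k)$ per order statistic (\Cref{lem-max-diff-order-stats}), and then concentrates the empirical quantile of the resulting i.i.d.\ $\Unif(0,1)$ sample (\Cref{lem-Beta-concentration}, whose proof is exactly the shifted-point Bernstein argument sketched above). Your more direct route is viable once the evaluation points are corrected, but that correction is the key missing idea rather than a cosmetic detail.
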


Here $\phi(t,k)$ is the maximum Kolmogorov-Smirnov distance between $\distQ_t$ and $\distQ_{t-1},...,\distQ_{t-k+1}$, measuring the bias incurred by the distribution shift over the last $k$ periods. The term $\psi(t,k,\delta)$ is a Bernstein-type concentration bound \citep{BLM13}, representing the stochastic error in $\widehat{q}_{t,k}$. As $k$ increases, the bias grows and the stochastic error shrinks. %This reveals a bias-variance tradeoff phenomenon. 
Ideally, we would like to select $k$ that minimizes the bias-variance decomposition in \eqref{eqn-bias-variance-decomp}. However, this cannot be directly achieved since the bias term $\phi(t, k)$ involves the unknown distribution shift. Instead, we will construct an estimate of $\phi(t, k)$ for window selection.

%emphasize the advantage of Bernstein bound over uniform convergence when $\alpha$ is small?

\subsection{Adaptive rolling window for quantile estimation and predictive inference}

%By definition, $\cdfhat_{t, i} (\quantilehat_{t, i}) \approx 1 - \alpha$ for all $i \in [t]$. 
For two windows $i \le k$, if the environment is relatively stationary between time $t-i+1$ and time $t$ but there is significant cumulative shift between time $t-k+1$ and time $t-i$, then $\cdfhat_{t, i} (\quantilehat_{t, k})$ would be far from $\cdfhat_{t,i}(\quantilehat_{t,i}) \approx 1 - \alpha$. Thus, we can estimate the bias of window $k$ using the empirical CDF's $\{ \cdfhat_{t, i} (\quantilehat_{t, k}) \}_{i=1}^k$ over shorter windows. To start with, we show that $| \cdfhat_{t,i}(\widehat{q}_{t,k}) - (1-\alpha) |$ also admits a bias-variance decomposition similar to that of $| \cdf_t(\widehat{q}_{t,k}) - (1-\alpha) |$ in \Cref{thm-bias-variance-decomp}. The proof is in \Cref{sec-proof-thm-bias-variance-decomp-proxy}.

\begin{theorem}[Empirical bias-variance decomposition]\label{thm-bias-variance-decomp-proxy}
Let Assumption \ref{assumption-AC} hold. Suppose $1\leq i \leq k \leq t$, and choose $\delta \in (0, 1)$. With probability at least $1-\delta$,
\begin{equation}\label{eqn-bias-variance-decomp-proxy}
| 
\widehat{F}_{t, i} (  \widehat{q}_{t, k} ) 
- (1 - \alpha) | \leq 
\frac{12}{5} \phi (t, k) + \frac{6}{5} \psi (t, k, \delta/2) + \frac{4}{5} \psi (t, i, \delta/2).
\end{equation}
\end{theorem}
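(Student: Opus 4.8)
The plan is to relate $\widehat F_{t,i}(\widehat q_{t,k})$ to the population quantity $F_t(\widehat q_{t,k})$ and then invoke Theorem 3.1. Writing $\widehat F_{t,i}(\widehat q_{t,k}) - (1-\alpha)$ as a sum of three pieces,
\[
\widehat F_{t,i}(\widehat q_{t,k}) - F_{t,i}(\widehat q_{t,k})
\;+\; F_{t,i}(\widehat q_{t,k}) - F_t(\widehat q_{t,k})
\;+\; F_t(\widehat q_{t,k}) - (1-\alpha),
\]
where $F_{t,i}$ denotes the averaged population CDF $\frac{1}{B_{t,i}}\sum_{j=t-i+1}^t B_j F_j$ over the window of length $i$. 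The third term is controlled by Theorem 3.1: on an event of probability at least $1-\delta/2$ it is bounded by $\phi(t,k)+\psi(t,k,\delta/2)$. The middle (bias) term is deterministic: since $\|F_j - F_t\|_\infty \le \phi(t,i) \le \phi(t,k)$ for every $j$ in the window of length $i$ (as $i\le k$), a convex-combination argument gives $|F_{t,i}(\widehat q_{t,k}) - F_t(\widehat q_{t,k})| \le \phi(t,k)$ uniformly in the argument, hence in particular at the random point $\widehat q_{t,k}$.

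The remaining piece is the empirical fluctuation $\widehat F_{t,i}(\widehat q_{t,k}) - F_{t,i}(\widehat q_{t,k})$. The subtlety is that $\widehat q_{t,k}$ is random and correlated with $\widehat F_{t,i}$ when $i < k$ (the two windows share the data from periods $t-i+1,\dots,t$), so one cannot simply treat $\widehat q_{t,k}$ as a fixed point. I would handle this by the same device presumably used in the proof of Theorem 3.1: control the fluctuation at a single well-chosen deterministic level rather than uniformly. Concretely, set $q^\star = \Qleft_{1-\alpha - \phi(t,k) - \psi(t,k,\delta/2)}(F_t)$ or a similarly shifted population quantile, so that on the good event of Theorem 3.1 one has $\widehat q_{t,k} \le q^\star$ (and a matching lower bound from a shifted quantile on the other side), and then use monotonicity of CDFs: $\widehat F_{t,i}(\widehat q_{t,k}) \le \widehat F_{t,i}(q^\star)$. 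Now $q^\star$ is deterministic, so a one-sided Bernstein bound for the Bernoulli average $\widehat F_{t,i}(q^\star)$ holds with probability $1-\delta/2$, giving $\widehat F_{t,i}(q^\star) \le F_{t,i}(q^\star) + \psi(t,i,\delta/2)$ up to constants; combined with $F_{t,i}(q^\star) \le F_t(q^\star) + \phi(t,k) \le (1-\alpha) - \psi(t,k,\delta/2) + \phi(t,k) + (\text{small})$ one closes the upper bound. The lower bound is symmetric, using a quantile shifted the other way and the right-continuity/right-quantile bookkeeping that the paper has already set up. Tracking the accumulated constants carefully (the factors $\tfrac{12}{5}$, $\tfrac65$, $\tfrac45$ suggest that the $\phi$ and $\psi(t,k,\cdot)$ terms each get hit two or three times while $\psi(t,i,\cdot)$ only once, plus slack from replacing $1-\alpha$ by a shifted level inside $\psi$) is where the bookkeeping concentrates.

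The main obstacle, then, is not any single deep estimate but the coupling between $\widehat q_{t,k}$ and $\widehat F_{t,i}$: the proof must avoid a union bound over all possible values of $\widehat q_{t,k}$ (which would inflate the variance term by a $\log$ factor and ruin sharpness) by instead sandwiching $\widehat q_{t,k}$ between two fixed shifted population quantiles using Theorem 3.1, and then applying one-sided Bernstein at those fixed points. A secondary technical point is that Assumption 3.1 (continuity of the $F_j$) must be used to ensure $F_t(q^\star)$ takes the exact value $1-\alpha-\phi-\psi$ (no atoms), so that the shifts compose cleanly; without continuity one would pick up extra error from jumps. I expect the argument to parallel the proof of Theorem 3.1 closely, essentially re-running it with $\widehat F_{t,i}$ in place of $F_t$ and absorbing the extra bias $\|F_{t,i}-F_t\|_\infty \le \phi(t,k)$ and the extra concentration $\psi(t,i,\delta/2)$ that this substitution introduces.
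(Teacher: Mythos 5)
Your proposal is correct and follows essentially the same route as the paper's proof: Theorem \ref{thm-bias-variance-decomp} sandwiches $\widehat{q}_{t,k}$ between the deterministic quantiles $\Qleft_{1-\alpha-r}(F_t)$ and $\Qright_{1-\alpha+r}(F_t)$ with $r=\phi(t,k)+\psi_1$, monotonicity of $\widehat{F}_{t,i}$ transfers this to the empirical CDF, and a one-sided Bernstein bound at those fixed points plus the deterministic bias $\phi(t,i)\le\phi(t,k)$ and the continuity of $F_t$ closes the argument exactly as you describe. (Minor quibble: your upper sandwich point should be the \emph{right} quantile at the \emph{raised} level $1-\alpha+r$ rather than a left quantile at a lowered level, but your parenthetical about the matching bound on the other side shows you have the right structure.)
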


We see that \eqref{eqn-bias-variance-decomp-proxy} differs from \eqref{eqn-bias-variance-decomp} only by multiplicative constants and an additional variance term associated with the smaller window $i$. In light of \eqref{eqn-bias-variance-decomp-proxy} and inspired by the \emph{Goldenshluger-Lepski method} \citep{GLe08} for adaptive nonparametric estimation, we define a bias proxy
\begin{equation}\label{eqn-bias-proxy}
\widehat{\phi}(t,k,\delta)
= \frac{5}{12} \max_{i\in[k]} \bigg( \big| \widehat{F}_{t,i}(\widehat{q}_{t,k}) - (1-\alpha) \big| - \bigg[ \frac{6}{5} \psi (t,k,\delta/2) + \frac{4}{5} \psi (t,i,\delta/2) \bigg]    \bigg)_+.
\end{equation}

Essentially, $\widehat{\phi}(t,k,\delta)$ teases out the bias term $\phi(t,k)$ in \eqref{eqn-bias-variance-decomp-proxy} by subtracting the stochastic errors $ \frac{6}{5} \psi (t,k,\delta/2) + \frac{4}{5} \psi (t,i,\delta/2)$ associated with $\widehat{q}_{t,k}$ and $\widehat{F}_{t,i}$. We take maximum over all smaller windows $i\in[k]$ so that we do not miss any distribution shifts over the last $k$ periods. As a direct consequence of \eqref{eqn-bias-variance-decomp-proxy}, $0 \le \widehat{\phi}(t,k,\delta) \le \phi(t,k)$ holds with high probability. It is then natural to choose a window that minimizes the approximate bias-variance decomposition $ \widehat{\phi}(t,k,\delta) + \psi(t,k,\delta)$. This leads to \Cref{alg-quantile}.

\begin{algorithm}[h]
	\begin{algorithmic}
	\STATE {\bf Input:} Datasets $\{ \dataset_j \}_{j=1}^t$ with $\dataset_j = \{ u_{j, i} \}_{i=1}^{B_j}$, miscoverage level $\alpha$ and hyperparameter $\delta'$.
	%\STATE Let $\delta' = \delta/(4t)$.
	\FOR{$k = 1,\cdots, t$}
	\STATE Compute $\widehat{q}_{t,k}$, $\psi  (t , k , \delta')$, $\psi  (t , k , \delta'/2)$ and $\widehat{\phi}(t,k,\delta')$ according to \eqref{eqn-empirical-quantile}, \eqref{eqn-psi} and \eqref{eqn-bias-proxy}.
	\ENDFOR
	\STATE Choose any $	\widehat{k} \in \argmin_{ k \in [t] }  \{ \widehat\phi (t, k, \delta')  + \psi (t , k , \delta')  \}$.
	\RETURN $\widehat{q}_{t,\widehat{k}}$. 
	\caption{Adaptive rolling window for quantile estimation}
	\label{alg-quantile}
	\end{algorithmic}
\end{algorithm}

As we have described in \Cref{sec-setup-reduction}, \Cref{alg-quantile} for quantile estimation immediately yields a procedure for constructing predictive sets. See \Cref{alg-predictive-inference} for a detailed description.

\begin{algorithm}[h]
	\begin{algorithmic}
	\STATE {\bf Input:} Calibration datasets $\{ \datasetB_j \}_{j=1}^t$ with $\datasetB_j = \{ ( \bx_{j, i}, y_{j,i} ) \}_{i=1}^{B_j}$, conformity score $s:~ \cX \times \cY \to \RR$, miscoverage level $\alpha$ and hyperparameter $\delta'$.
	\STATE Let $u_{j,i} = s( \bx_{j, i}, y_{j,i} )$ and set $\dataset_j = \{u_{j,i} \}_{i=1}^{B_j} $.
	\STATE Run \Cref{alg-quantile} with input $\{\dataset_j\}_{j=1}^t$, $\alpha$ and $\delta'$ to obtain $\widehat{q}_{t,\widehat{k}}$.
	\RETURN $\Chat_t(\cdot) =   \{ y \in \cY :~ s ( \cdot , y) \leq \widehat{q}_{t,\widehat{k}} \}$. 
	\caption{Adaptive rolling window for predictive inference}
	\label{alg-predictive-inference}
	\end{algorithmic}
\end{algorithm}

\section{Theoretical analysis}\label{sec-theory}

We now present theoretical guarantees for \Cref{alg-quantile,alg-predictive-inference}, showing their adaptivity to the unknown distribution drift over time.

\begin{theorem}[Oracle inequality for \Cref{alg-quantile}]\label{thm-GL}
	Choose $\delta\in(0,1)$ and take $\delta' = \delta / (4t^2)$ in \Cref{alg-quantile}. With probability at least $1-\delta$, \Cref{alg-quantile} outputs $\widehat{q}_{t,\widehat{k}}$ satisfying
	\begin{equation}
	\big| F_t(\widehat{q}_{t,\widehat{k}}) - (1-\alpha) \big| \lesssim \min_{k\in[t]} \left\{ \phi(t,k) + \psi(t,k) \right\}
		\lesssim \min_{k\in[t]} \left\{ \phi(t,k) + \sqrt{ \frac{\alpha (1 - \alpha)}{B_{t, k}} }  
		+ \frac{1}{B_{t, k}}
		\right\}
		.
	\end{equation}
	Here $\lesssim$ only hides a logarithmic factor of $t$ and $\delta^{-1}$. 
\end{theorem}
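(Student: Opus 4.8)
The plan is to combine the empirical bias-variance decomposition of \Cref{thm-bias-variance-decomp-proxy} with the true bias-variance decomposition of \Cref{thm-bias-variance-decomp} via a standard Goldenshluger--Lepski argument, and then to bound $\phi(t, \widehat{k})$ by the oracle-optimal bias using the defining property of $\widehat{k}$. First I would set up a single high-probability event. There are $O(t^2)$ relevant $(i,k)$ pairs, and for each we invoke \Cref{thm-bias-variance-decomp-proxy} (and \Cref{thm-bias-variance-decomp} for the $k = \widehat{k}$ and $k = k^\star$ cases) with failure probability $\delta' = \delta/(4t^2)$; a union bound shows that with probability at least $1 - \delta$, all of the following hold simultaneously: (i) $|F_t(\widehat{q}_{t,k}) - (1-\alpha)| \le \phi(t,k) + \psi(t,k,\delta')$ for all $k$; (ii) $|\widehat{F}_{t,i}(\widehat{q}_{t,k}) - (1-\alpha)| \le \tfrac{12}{5}\phi(t,k) + \tfrac{6}{5}\psi(t,k,\delta'/2) + \tfrac{4}{5}\psi(t,i,\delta'/2)$ for all $i \le k$; and, as already noted in the text, (iii) $0 \le \widehat{\phi}(t,k,\delta') \le \phi(t,k)$ for all $k$. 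Note that $\psi(t,k,\delta') \lesssim \psi(t,k) \cdot \sqrt{\log(t/\delta)}$, so the $\delta'$-versions only cost a logarithmic factor; I will absorb these into $\lesssim$ throughout.

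The core of the argument is to show $\phi(t,\widehat{k}) \lesssim \min_{k}\{\phi(t,k) + \psi(t,k)\}$ on this event. Let $k^\star$ be any minimizer of $\phi(t,k) + \psi(t,k)$ over $[t]$. I split into two cases according to whether $\widehat{k} \ge k^\star$ or $\widehat{k} < k^\star$. In the first case, $k^\star \le \widehat{k}$, so $\widehat{F}_{t,k^\star}(\widehat{q}_{t,\widehat{k}})$ is one of the terms in the max defining $\widehat{\phi}(t,\widehat{k},\delta')$; hence
\begin{align*}
\widehat{\phi}(t,\widehat{k},\delta') \ge \tfrac{5}{12}\Big( \big|\widehat{F}_{t,k^\star}(\widehat{q}_{t,\widehat{k}}) - (1-\alpha)\big| - \big[\tfrac{6}{5}\psi(t,\widehat{k},\delta'/2) + \tfrac{4}{5}\psi(t,k^\star,\delta'/2)\big] \Big)_+ .
\end{align*}
On the other hand, by the triangle inequality and the true decomposition (i) applied with window $k^\star$, together with $|\widehat{F}_{t,k^\star}(\widehat{q}_{t,k^\star}) - (1-\alpha)| $ being controlled by $\psi(t,k^\star,\cdot)$ up to constants (an elementary fact about left empirical quantiles, or obtainable from \Cref{thm-bias-variance-decomp-proxy} with $i = k = k^\star$ since then $\phi$ vanishes), I can lower-bound $|\widehat{F}_{t,k^\star}(\widehat{q}_{t,\widehat{k}}) - (1-\alpha)|$ in terms of $\phi(t,\widehat{k})$ minus $\psi$-type terms — this is where the window monotonicity $\phi(t,\cdot)$ nondecreasing is used, to say that the "far" behavior of $\widehat{q}_{t,\widehat{k}}$ under $F_{k^\star}$ reflects the bias $\phi(t,\widehat{k})$. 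Combining, $\phi(t,\widehat{k}) \lesssim \widehat{\phi}(t,\widehat{k},\delta') + \psi(t,\widehat{k},\delta') + \psi(t,k^\star,\delta')$. Now the optimality of $\widehat{k}$ in \Cref{alg-quantile} gives $\widehat{\phi}(t,\widehat{k},\delta') + \psi(t,\widehat{k},\delta') \le \widehat{\phi}(t,k^\star,\delta') + \psi(t,k^\star,\delta') \le \phi(t,k^\star) + \psi(t,k^\star,\delta')$ using (iii), and since $\psi(t,\widehat{k},\delta') \le \widehat{\phi}(t,\widehat{k},\delta') + \psi(t,\widehat{k},\delta')$ (as $\widehat\phi \ge 0$) is also $\le \phi(t,k^\star) + \psi(t,k^\star,\delta')$, all terms on the right are $\lesssim \phi(t,k^\star) + \psi(t,k^\star)$ up to the logarithmic factor. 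In the second case $\widehat{k} < k^\star$, monotonicity gives directly $\phi(t,\widehat{k}) \le \phi(t,k^\star)$, which is even easier.

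Finally, I would assemble the conclusion: by (i) with $k = \widehat{k}$,
\begin{align*}
\big|F_t(\widehat{q}_{t,\widehat{k}}) - (1-\alpha)\big| \le \phi(t,\widehat{k}) + \psi(t,\widehat{k},\delta') \lesssim \phi(t,\widehat{k}) + \widehat{\phi}(t,\widehat{k},\delta') + \psi(t,\widehat{k},\delta'),
\end{align*}
and the preceding case analysis bounds each term by $\min_k\{\phi(t,k) + \psi(t,k)\}$ up to a $\mathrm{polylog}(t/\delta)$ factor; the second, more explicit inequality in the theorem follows by plugging in the definition \eqref{eqn-psi} of $\psi$ and noting $\psi(t,k) \asymp \sqrt{\alpha(1-\alpha)/B_{t,k}} + 1/B_{t,k}$. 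The main obstacle I anticipate is the comparison step in Case 1: carefully relating $|\widehat{F}_{t,k^\star}(\widehat{q}_{t,\widehat{k}}) - (1-\alpha)|$ to the true bias $\phi(t,\widehat{k})$, since this requires translating between the empirical CDF $\widehat{F}_{t,k^\star}$ and the population CDFs, and exploiting that a large bias over the long window $\widehat{k}$ must already manifest over some sub-window — in fact it is cleaner to take the max over $i \in [\widehat{k}]$ in $\widehat\phi$ seriously rather than fixing $i = k^\star$, and to pair it with the decomposition $|F_t(\widehat q_{t,\widehat k}) - (1-\alpha)| \le |F_t - F_i|_\infty + |F_i(\widehat q_{t,\widehat k}) - (1-\alpha)|$ evaluated at the maximizing $i$. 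Handling the nonconvex quantile loss and the left/right quantile subtleties (already developed in the proofs of Theorems~\ref{thm-bias-variance-decomp} and~\ref{thm-bias-variance-decomp-proxy}) will need care but should be reusable here.
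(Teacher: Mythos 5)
Your setup (the high-probability events, the $\delta' = \delta/(4t^2)$ union bound, and the dichotomy between $\widehat{k} < k^\star$ and $\widehat{k} \ge k^\star$) matches the paper's proof, and your easy case $\widehat{k} < k^\star$ is essentially the paper's chain $|F_t(\widehat{q}_{t,\widehat{k}}) - (1-\alpha)| \le \phi(t,\widehat{k}) + \psi(t,\widehat{k}) \le \phi(t,k^\star) + \widehat{\phi}(t,\widehat{k}) + \psi(t,\widehat{k}) \le \phi(t,k^\star) + \widehat{\phi}(t,k^\star) + \psi(t,k^\star) \le 2\phi(t,k^\star) + \psi(t,k^\star)$. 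The problem is the hard case $k^\star \le \widehat{k}$, and this is a genuine gap, not just a technicality.

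In that case your stated goal is to prove $\phi(t,\widehat{k}) \lesssim \phi(t,k^\star) + \psi(t,k^\star)$, but such a bound is \emph{false} in general and cannot be established from the events you have. The only relation between $\widehat{\phi}$ and $\phi$ available on the good event is the one-sided inequality $\widehat{\phi}(t,k,\delta') \le \phi(t,k)$: the proxy $\widehat{\phi}$ only peers at the empirical CDFs $\widehat{F}_{t,i}$ evaluated at the single point $\widehat{q}_{t,k}$, whereas $\phi(t,k) = \max_j \|F_j - F_t\|_\infty$ is a supremum over the whole line. A window can have large $\phi(t,\widehat{k})$ (the distributions shift a lot somewhere) while every $\widehat{F}_{t,i}(\widehat{q}_{t,\widehat{k}})$ is close to $1-\alpha$ (because the shift is away from $\widehat{q}_{t,\widehat{k}}$, or because shifts in opposite directions average out), so the algorithm can legitimately pick a large $\widehat{k}$ with large $\phi(t,\widehat{k})$ and still produce a good quantile estimate. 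You sense this yourself in the ``main obstacle'' paragraph, but the fix you sketch ($|F_t(\widehat{q}) - (1-\alpha)| \le \|F_t - F_i\|_\infty + |F_i(\widehat{q}) - (1-\alpha)|$) re-introduces $\phi(t,\widehat{k})$ via $\|F_t - F_i\|_\infty$ and so still hits the same wall. Consequently, your final assembly, which passes through $\phi(t,\widehat{k}) + \psi(t,\widehat{k},\delta')$, does not close.

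The paper's resolution sidesteps $\phi(t,\widehat{k})$ entirely. Fixing any $k \le \widehat{k}$, one first bounds the \emph{empirical} quantity: $|\widehat{F}_{t,k}(\widehat{q}_{t,\widehat{k}}) - (1-\alpha)| \le \frac{12}{5}\widehat{\phi}(t,\widehat{k}) + \frac{6}{5}\psi(t,\widehat{k},\delta'/2) + \frac{4}{5}\psi(t,k,\delta'/2)$ straight from the definition \eqref{eqn-bias-proxy} (since $k$ is in the max), then uses the optimality of $\widehat{k}$ and $\widehat{\phi} \le \phi$ to get $|\widehat{F}_{t,k}(\widehat{q}_{t,\widehat{k}}) - (1-\alpha)| \le r_k := \frac{12}{5}\phi(t,k) + 4\psi(t,k,\delta')$. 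This sandwiches $\widehat{q}_{t,\widehat{k}}$ between $\Qleft_{1-\alpha-r_k}(\widehat{F}_{t,k})$ and $\Qright_{1-\alpha+r_k}(\widehat{F}_{t,k})$. The crucial remaining step---which is absent from your proposal---is to push $F_t$ through this sandwich: one writes $F_t(\Qright_{1-\alpha+r_k}(\widehat{F}_{t,k}))$ as a right-quantile of the point measure $\frac{1}{B_{t,k}}\sum \delta_{F_t(u_{j,i})}$, passes to $\delta_{F_j(u_{j,i})}$ at a cost $\phi(t,k)$ via the perturbation bound (\Cref{lem-max-diff-order-stats}), and invokes the empirical-quantile concentration result (\Cref{lem-Beta-concentration}) for i.i.d.\ $\Unif(0,1)$ variables, contributing two more families of events to the union bound. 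This mechanism is what converts the empirical control into the desired population-CDF control $|F_t(\widehat{q}_{t,\widehat{k}}) - (1-\alpha)| \lesssim \phi(t,k) + \psi(t,k,\delta')$. Without it, the hard case is not proved.
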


According to \Cref{thm-GL}, \Cref{alg-quantile} chooses a window $\widehat{k}$ that is near-optimal for the bias-variance trade-off in \eqref{eqn-bias-variance-decomp}, without any knowledge of the underlying distribution drift. We illustrate this oracle property using the following examples.

\begin{example}[Change point]\label{eg-changepoint}
	Suppose that the environment is stationary in the last $K$ periods for some $K\in[t-1]$ but has been very different before, i.e.~$\distQ_{t-K}\neq \distQ_{t-K+1} = \cdots = \distQ_t$. If $K$ were known, then a natural estimate of the $(1-\alpha)$-quantile of $\distQ_t$ would be $\widehat{q}_{t,K}$. \Cref{thm-GL} shows that the estimation quality of $\widehat{q}_{t,\widehat{k}}$ is comparable to $\widehat{q}_{t,K}$ which uses $B_{t,K}$ i.i.d.~samples: up to logarithmic factors,
	\[
	\big| F_t(\widehat{q}_{t,\widehat{k}}) - (1-\alpha) \big|  
	\lesssim   \phi(t,K) + \psi(t,K)  \lesssim \sqrt{\frac{\alpha(1-\alpha)}{B_{t,K}}} +  \frac{1}{B_{t, K}}.
	\]
	In this case, \Cref{alg-quantile} adapts to the local stationarity.
\end{example}

\begin{example}[Bounded drift]\label{eg-bounded-drift}
	Suppose that the distribution drift between consecutive periods is bounded, in the sense that $\sup_{j \geq 1} \|F_{j+1} - F_j\|_{\infty} \le   \Delta$ holds for some $0 < \Delta \leq 1$. For simplicity, we further assume $B_j=1$ for all $j$. In this case, $\phi(t,k) \le (k-1)\Delta$, so the bias-variance decomposition in \Cref{thm-bias-variance-decomp} becomes
	\[
	| F_t (\widehat{q}_{t,k}) - (1-\alpha) | \lesssim (k-1)\Delta + \sqrt{\frac{\alpha(1-\alpha)}{k}} + \frac{1}{k}
	\]
	up to a logarithmic factor. If $\Delta$ were known, then one could pick the optimal window size $k^* \asymp \Delta^{-2/3}$, which gives an error of $\cO(\Delta^{1/3})$. \Cref{thm-GL} shows that without knowing $\Delta$, \Cref{alg-quantile} picks $\widehat{k}$ whose quality is comparable to $k^*$.
\end{example}

Below we provide a sketch of proof for \Cref{thm-GL}. The full proof is given in \Cref{sec-proof-thm-GL}.

\begin{proof}[\bf Proof sketch for \Cref{thm-GL}]
To keep the notations simple and focus on the main ideas, we suppress the parameters $t$ and $\delta$ in the expressions of $\phi$, $\widehat{\phi}$ and $\psi$. We will prove that with high probability,
	\[
	\big| F_t(\widehat{q}_{t,\widehat{k}}) - (1-\alpha) \big| \lesssim  \phi(k) + \psi(k),\quad\forall k\in[t].
	\]
	By \Cref{thm-bias-variance-decomp-proxy} we know that $0\le \widehat{\phi}  \le \phi $ with high probability. For every $k\in\{\widehat{k}+1,\cdots,t\}$, with high probability,
	\begin{align*}
		|F_t(\widehat{q}_{t,\widehat{k}}) - (1-\alpha)|
		&\le \phi( \widehat{k}) + \psi( \widehat{k} ) \tag{by \Cref{thm-bias-variance-decomp}}  \\[4pt]
		&\le \phi( k) + \widehat{\phi}( \widehat{k} ) + \psi( \widehat{k} )  \tag{by the motonicity of $\phi$ and the fact $\widehat{\phi}\ge 0$} \\[4pt]
		&\le \phi( k) + \widehat{\phi}( k ) + \psi( k ) \tag{by the definition of $\widehat{k}$} \\[4pt]
		&\lesssim	\phi( k) + \psi( k ). \tag{by $\widehat{\phi}\le\phi$}
	\end{align*}
	
It remains to consider $k\in [ \widehat{k} ]$. Choose any $k$ in this range. With high probability,
	\begin{align}
		|\widehat{F}_{t,k}(\widehat{q}_{t,\widehat{k}}) - (1-\alpha)|
		&\lesssim 
		\widehat{\phi}( \widehat{k} ) + \psi ( \widehat{k} ) + \psi ( k ) \tag{by the definition of $\widehat{\phi}$} \\[4pt]
		&\le 
		\widehat{\phi}( k ) + \psi( k ) + \psi( k ) \tag{by the definition of $\widehat{k}$} \\[4pt]
		&\lesssim
		\phi( k) + \psi( k ). \tag{by $\widehat{\phi}\le\phi$}
	\end{align}
	Let $r_k \asymp \phi(k) + \psi(k)$ denote the right-hand side of the inequality above. The bound $|\widehat{F}_{t,k}(\widehat{q}_{t,\widehat{k}}) - (1-\alpha)| \le r_k$ implies that the final estimate $ \widehat{q}_{t,\widehat{k}} $ is sandwiched between the left $(1 - \alpha - r_k)$-quantile and the right $(1 - \alpha + r_k)$-quantile of $\widehat{F}_{t,k} $. Hence,
	\[
\cdf_t \big( \Qleft_{1-\alpha - r_k}  ( \widehat{F}_{t,k}  ) \big) 
\le \cdf_t ( \widehat{q}_{t,\widehat{k}} )
\le 
\cdf_t \big( \Qright_{1-\alpha + r_k}  ( \widehat{F}_{t,k}  ) \big).
\]
We use a concentration bound for empirical quantile under distribution shift (\Cref{lem-quantile} in \Cref{sec-technical}) to derive that with high probability,
\begin{align*}
& \cdf_t \big( \Qright_{1-\alpha + r_k} ( \cdfhat_{t, k} ) \big) - (1-\alpha + r_k) \lesssim
\phi (k) + \psi (k) , \\[4pt]
& (1-\alpha - r_k) - \cdf_t \big( \Qleft_{1-\alpha - r_k} ( \cdfhat_{t, k} ) \big) \lesssim
\phi (k) + \psi (k) .
\end{align*}
Combining the above estimates, we get
\[
| F_t(\widehat{q}_{t,\widehat{k}}) - (1-\alpha) |
\lesssim
\phi(k) + \psi (k) , \qquad \forall k \in [\widehat{k}] .
\]
This finishes the proof.
\end{proof}

\Cref{thm-GL} also translates to a theoretical guarantee for \Cref{alg-predictive-inference}. In particular, $\distQ_j$ and $\cdf_j$ are, respectively, the distribution and the CDF of the conformity score $s( \bx_j , y_j )$, where $(\bx_j,y_j)\sim\distP_j$. To facilitate interpretation, we may further upper bound the bias term by 
\[
\phi(t,k) = \max_{t-k+1\le j\le t} \| F_j - F_t \|_{\infty} \le \max_{t-k+1\le j\le t}\TV(\distP_j,\distP_t),
\] 
which is the maximum total variation distance between $\distP_t$ and $\distP_{t-1},...,\distP_{t-k+1}$. This establishes the following result. Its proof is omitted.

\begin{corollary}\label{cor-predictive-inference}
Let Assumption \ref{assumption-independence} hold.	Choose $\delta\in(0,1)$ and take $\delta' = \delta / (4t^2)$ in \Cref{alg-predictive-inference}. With probability at least $1-\delta$, \Cref{alg-predictive-inference} outputs $\Chat_t(\cdot)$ satisfying
	\[
	\left |\PP\big( y_t \in \Chat_t(\bx_t) \mid \{\datasetB_j\}_{j=1}^t \big) - (1-\alpha) \right| 
	\lesssim
	\min_{k\in[t]} \left\{ \max_{t-k+1\le j\le t}\TV(\distP_j,\distP_t) + \sqrt{\frac{\alpha(1-\alpha)}{B_{t,k}}}  
+	 \frac{1}{B_{t, k}}
	 \right\},
	\]
	where $\lesssim$ only hides a logarithmic factor of $t$ and $\delta^{-1}$. Consequently,
	\begin{align*}
		&\left |\PP\big( y_t \in \Chat_t(\bx_t) \big) - (1-\alpha) \right| 
		\\[4pt]&
		\le 
		\EE \left |\PP\big( y_t \in \Chat_t(\bx_t) \bigm| \{\datasetB_j\}_{j=1}^t \big) - (1-\alpha) \right| \\[4pt]
		&\lesssim
		(1-\delta)\min_{k\in[t]} \left\{ \max_{t-k+1\le j\le t}\TV(\distP_j,\distP_t) + \sqrt{\frac{\alpha(1-\alpha)}{B_{t,k}}} + \frac{1}{B_{t,k}} \right\} + \delta\max\{\alpha,1-\alpha\}.
	\end{align*}
\end{corollary}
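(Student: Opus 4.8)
The plan is to obtain \Cref{cor-predictive-inference} as a direct translation of \Cref{thm-GL} through the reduction of \Cref{sec-setup-reduction}, combined with the data-processing inequality and a routine conditional-to-marginal argument. First I would instantiate \Cref{prob-quantile-estimation} exactly as in \Cref{sec-setup-reduction}: set $\distQ_j = \mathrm{Law}(s(\bx_j,y_j))$ with $(\bx_j,y_j)\sim\distP_j$, and $u_{j,i}=s(\bx_{j,i},y_{j,i})$, so that each $\dataset_j$ is a deterministic function of $\datasetB_j$. Under \Cref{assumption-independence} the score datasets $\{\dataset_j\}_{j=1}^t$ are then independent, and within each $\dataset_j$ the scores are i.i.d.~from $\distQ_j$; the continuity assumption (\Cref{assumption-AC}, in force for the score distributions per the discussion preceding it) guarantees each $\distQ_j$ has a continuous CDF. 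Hence running \Cref{alg-predictive-inference} is the same as running \Cref{alg-quantile} on a valid instance of \Cref{prob-quantile-estimation}, and \Cref{thm-GL} applies with $\delta'=\delta/(4t^2)$: there is an event $A$, measurable with respect to $\{\datasetB_j\}_{j=1}^t$, with $\PP(A)\ge 1-\delta$, on which $|F_t(\widehat{q}_{t,\widehat{k}})-(1-\alpha)|\lesssim\min_{k\in[t]}\{\phi(t,k)+\sqrt{\alpha(1-\alpha)/B_{t,k}}+1/B_{t,k}\}$.

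Next I would convert this into the two coverage statements. Since the test point $(\bx_t,y_t)\sim\distP_t$ is independent of the calibration data and $s$ is a fixed function, conditioning on $\{\datasetB_j\}_{j=1}^t$ freezes $\widehat{q}_{t,\widehat{k}}$ and, exactly as in \eqref{eqn-general-coverage}, $\PP(y_t\in\Chat_t(\bx_t)\mid\{\datasetB_j\}_{j=1}^t)=F_t(\widehat{q}_{t,\widehat{k}})$. For the bias term I would invoke the data-processing inequality for total variation applied to $(\bx,y)\mapsto s(\bx,y)$: for every $j$, $\|F_j-F_t\|_\infty=\sup_{x\in\RR}|\distQ_j((-\infty,x])-\distQ_t((-\infty,x])|\le\TV(\distQ_j,\distQ_t)\le\TV(\distP_j,\distP_t)$, because $\distQ_j,\distQ_t$ are the pushforwards of $\distP_j,\distP_t$ under $s$. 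Therefore $\phi(t,k)\le\max_{t-k+1\le j\le t}\TV(\distP_j,\distP_t)$, and substituting this into the bound on $A$ gives the first displayed inequality. For the marginal bound, the tower property and Jensen's inequality give $|\PP(y_t\in\Chat_t(\bx_t))-(1-\alpha)|\le\EE|F_t(\widehat{q}_{t,\widehat{k}})-(1-\alpha)|$; I would split this expectation over $A$ and $A^c$, use the bound above on $A$ and the crude estimate $|F_t(\widehat{q}_{t,\widehat{k}})-(1-\alpha)|\le\max\{\alpha,1-\alpha\}$ on $A^c$ (valid because $F_t$ is $[0,1]$-valued), and use $\PP(A^c)\le\delta$. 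A short case analysis --- falling back on the $[0,1]$ bound globally when the bias--variance term exceeds $\max\{\alpha,1-\alpha\}$ --- then yields the stated $(1-\delta)(\cdot)+\delta\max\{\alpha,1-\alpha\}$ form.

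I do not anticipate any genuine obstacle: the corollary is essentially a repackaging of \Cref{thm-GL} in the language of predictive inference, which is presumably why the paper omits the proof. The only points deserving attention are (i) verifying that the reduction in \Cref{sec-setup-reduction} produces a bona fide instance of \Cref{prob-quantile-estimation} --- in particular independence across and within the score datasets, and continuity of the score CDFs so that \Cref{assumption-AC} is in force --- and (ii) handling, in the conditional-to-marginal step, the low-probability event $A^c$ on which \Cref{thm-GL} gives no control, where one simply uses that $F_t$ takes values in $[0,1]$.
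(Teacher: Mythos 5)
Your proposal is correct and follows exactly the route the paper intends (and omits): instantiate \Cref{thm-GL} with $\distQ_j=\mathrm{Law}(s(\bx_j,y_j))$, use \eqref{eqn-general-coverage} to identify the conditional coverage with $F_t(\widehat{q}_{t,\widehat{k}})$, bound $\phi(t,k)$ by the total variation distance via data processing, and pass to the marginal statement by splitting on the high-probability event. Your explicit handling of the case where the bias--variance term exceeds $\max\{\alpha,1-\alpha\}$ is a correct and welcome detail that the paper glosses over.
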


\section{Numerical experiments}\label{sec-experiments}

We test our proposed method on synthetic and real data, focusing on regression problems with $\cY = \RR$ and prediction intervals of the form \eqref{eqn-prediction-interval}. As a common phenomenon in learning theory, the non-asymptotic analysis provides sharp rates but not necessarily optimal constants. We make minor changes to the functions $\psi$ and $\widehat{\phi}$ in \Cref{alg-quantile} to correct for overly large constant factors. To improve computational efficiency, we only consider candidate window sizes that are powers of 2; see \Cref{alg-quantile-experiment} in \Cref{sec-algo-refined} for a detailed description. The theoretical guarantees in \Cref{sec-theory} continue to hold for this modification up to a constant factor. Our prediction intervals are constructed by \Cref{alg-predictive-inference}, with \Cref{alg-quantile-experiment} (instead of \Cref{alg-quantile}) as the sub-routine for quantile estimation. 

Throughout the experiments, we take $\alpha = 0.1$ to aim for prediction sets with $90\%$ coverage, and set the hyperparameter $\delta'$ to be $0.1$. For notational convenience, we denote our aforementioned method by $\ARW$ (adaptive rolling window). We will compare it with the following two families of benchmark approaches: 
\begin{itemize}
\item \Cref{alg-predictive-inference-fixed-window}, denoted by $\cV_k$. This is a non-adaptive, fixed-window version of \Cref{alg-predictive-inference}. We consider window sizes $k \in \{1,4,16,64,256,1024\}$. 

\item The nonexchangeable split conformal prediction method in Section 3.1 of \cite{BCR23}, with weights $w_i = \rho ^{t-i}$, $i\in[t]$ at time $t$. We refer to this method as $\cW_{\rho}$, and consider decay rates $\rho \in \{ 0.99,0.9,0.25 \}$.
\end{itemize}

\begin{algorithm}[h]
	\begin{algorithmic}
	\STATE {\bf Input:} Calibration datasets $\{ \datasetB_j \}_{j=1}^t$ with $\datasetB_j = \{ ( \bx_{j, i}, y_{j,i} ) \}_{i=1}^{B_j}$, conformity score $s:~ \cX \times \cY \to \RR$, miscoverage level $\alpha$ and window size $k$.
	\STATE Let $u_{j,i} = s( \bx_{j, i}, y_{j,i} )$. Compute $\widehat{q}_{t,k\wedge t}$ according to \eqref{eqn-empirical-quantile}.
	\RETURN $\Chat_t(\cdot) =   \{ y \in \cY :~ s ( \cdot , y) \leq \widehat{q}_{t,k\wedge t} \}$. 
	\caption{Predictive inference via fixed-window quantile estimation}
	\label{alg-predictive-inference-fixed-window}
	\end{algorithmic}
\end{algorithm}

All our results are averaged over 100 independent runs. The standard errors are less than $10\%$ of those mean values and thus omitted for space considerations. Our code is available at \url{https://github.com/eliselyhan/predictive-inference}.

\subsection{Synthetic data}

In the synthetic data experiment, we perform predictive inference for two problem instances, namely, Gaussian mean estimation and linear regression. In both instances, there are $T = 1000$ periods in total. In each period, we split the data into a training set for fitting predictors, and a calibration set for forming prediction sets. 

\paragraph{Gaussian mean estimation.} The mean estimation problem has $\cX=\varnothing$, $\cY = \RR$, and $\distP_t = \cN(\mu_t,1)$. We consider two patterns of the sequence $\{\mu_t\}_{t=1}^T$, given in \Cref{fig:true-means}. 
%, which we refer to as the stationary and the non-stationary cases. 
The left panel represents the stationary case where $\mu_1=\cdots=\mu_T$. The right panel corresponds to a highly non-stationary pattern generated using a randomized mechanism, which we describe in \Cref{sec-random-function}.

\begin{figure}[h]
	\centering
	\begin{subfigure}{
			\label{fig-true-means-stationary}
			\includegraphics[scale=0.55]{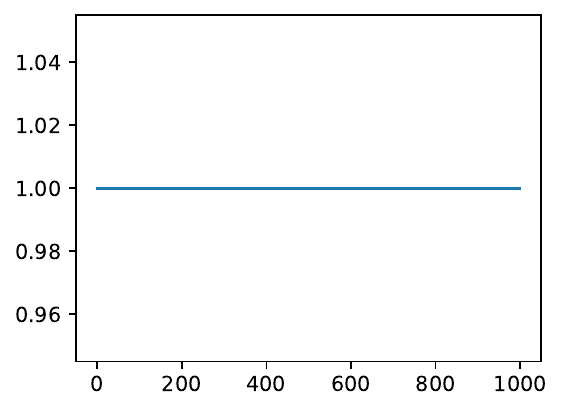}
		}
	\end{subfigure}
	\quad
	\begin{subfigure}{
			\label{fig-true-means-nonstationary}
			\includegraphics[scale=0.55]{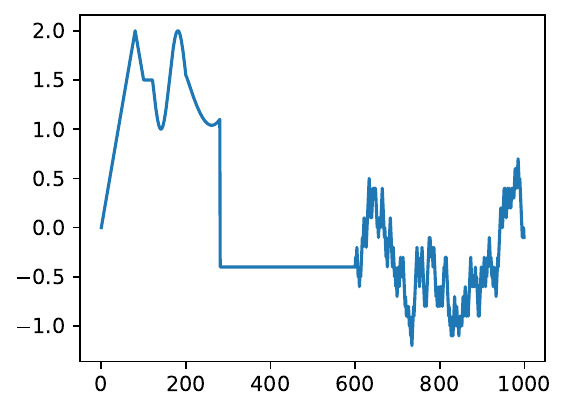}
		}
	\end{subfigure}
	\caption{True means $\mu_t$.}
	\label{fig:true-means}
\end{figure}

In each period $j\in[T]$, we generate i.i.d.~training samples $\datasetB_j^{\tr}$ and i.i.d.~calibration samples $\datasetB_j^{\ca}$ from $\distP_j$. The size $|\datasetB_j^{\tr}|$ of the training set $\datasetB_j^{\tr}$ is drawn uniformly over $\{ 1 , \cdots, 9 \}$, and $|\datasetB_j^{\tr}|=|\datasetB_j^{\ca}|$. At time $t\in[T]$, we consider point estimates $\widehat{\mu}_{t,k}\in\RR$ that compute moving averages of the data $\{\datasetB_j\}_{j=t-k\wedge t+1}^t$ from the last $k\in\{1,16,256,1024\}$ periods. For each model, we then compute the conformity scores using the calibration data.

Each of the candidate methods ($\ARW$ and the benchmark algorithms $\cV_k$ and $\cW_{\rho}$) produces a prediction interval $\widehat{C}_t$ of the form \eqref{eqn-prediction-interval}. We calculate the mean absolute error (MAE) between the true coverage and target level $1-\alpha$, given by 
\[
\text{MAE} = \frac{1}{T-100} \sum_{t=T-99}^{T} |\PP(z_t\in\widehat{C}_t\mid \widehat{C}_t) - (1-\alpha)|,
\]
where $z_t\sim\distP_t$, and the conditional probability $\PP(z_t\in \widehat{C}_t\mid \widehat{C}_t)$ can be computed using the CDF of the normal distribution. Following \cite{BCR23}, we average data only after the initial burn-in period with $100$ steps.

Tables \ref{tab:mean-stationary} and \ref{tab:mean-nonstationary} list the MAEs of different algorithms for the two patterns of $\{\mu_t\}_{t=1}^T$ respectively. The results are averaged over $100$ independent runs. In each setting, the best performance of benchmark algorithm is shown in boldface. Our algorithm $\ARW$ achieves comparable results while being agnostic to the underlying temporal drift. \Cref{fig:mean-est-bar} provides bar plots to visualize the MAEs of all algorithms for the prediction model $\widehat{\mu}_{t,1}$, which corresponds to the first rows of \Cref{tab:mean-stationary,tab:mean-nonstationary}.

\begin{table}[!h]
	\vskip 0.15in
	\centering
				\begin{tabular}{lccccccccccr}
					\toprule
					Training\\ Window & ARW & $\mathcal{W}_{0.99}$ & $\mathcal{W}_{0.9}$ & $\mathcal{W}_{0.5}$ & $\mathcal{W}_{0.99}$ & $\mathcal{V}_1$ & $\mathcal{V}_4$ & $\mathcal{V}_{16}$  & $\mathcal{V}_{64}$ & $\mathcal{V}_{256}$ & $\mathcal{V}_{1024}$ \\
					\midrule
					1 & 0.50 & {\bf 0.48} & {\bf 0.48} & 0.62 & 0.91 & 15.32 & 5.63 & 2.71 & 1.33 & 0.69 & {\bf 0.48} \\
					64 & 0.47 & {\bf 0.46} & {\bf 0.46} & 0.61 & 0.90 & 15.31 & 5.67 & 2.72 & 1.35 & 0.67 & {\bf 0.46} \\
					256 & 0.47 & {\bf 0.46} & {\bf 0.46} & 0.62 & 0.92 & 15.30 & 5.66 & 2.71 & 1.33 & 0.68 & {\bf 0.46} \\
					1024 & 0.47 & {\bf 0.45} & {\bf 0.45} & 0.61 & 0.91 & 15.30 & 5.66 & 2.71 & 1.33 & 0.68 & {\bf 0.45} \\
					\bottomrule
				\end{tabular}
	\vskip 0.1in
	\caption{Mean absolute error (in \%) of coverage for stationary mean estimation.}
	\label{tab:mean-stationary}
\end{table}

\begin{table}[h]
	\vskip 0.15in
	\centering
				\begin{tabular}{lccccccccccccr}
					\toprule
					Training\\ Window & ARW & $\mathcal{W}_{0.99}$ & $\mathcal{W}_{0.9}$ & $\mathcal{W}_{0.5}$ & $\mathcal{W}_{0.99}$ & $\mathcal{V}_1$ & $\mathcal{V}_4$ & $\mathcal{V}_{16}$  & $\mathcal{V}_{64}$ & $\mathcal{V}_{256}$ & $\mathcal{V}_{1024}$ \\
					\midrule
					1 & 3.28 & 7.24 & 7.26 & 7.50 & 7.71 & 15.32 & 5.65 & 2.99 & {\bf 2.81} & 4.41 & 7.24 \\
					64 & 2.53 & 7.34 & 7.37 & 7.65 & 7.88 & 15.28 & 5.70 & 3.06 & {\bf 2.23} & 3.68 & 7.34 \\
					256 & 3.04 & 7.73 & 7.76 & 7.97 & 8.17 & 15.34 & 5.72 & 3.18 & {\bf 3.10} & 4.54 & 7.73 \\
					1024 & 3.50 & 5.45 & 5.45 & 5.47 & 5.54 & 15.42 & 5.79 & {\bf 3.36} & 3.45 & 5.23 & 5.45 \\
					\bottomrule
				\end{tabular}
	\vskip 0.1in
	\caption{Mean absolute error (in \%) of coverage for non-stationary mean estimation.}
	\label{tab:mean-nonstationary}
\end{table}

\begin{figure}[h]
    \centering
    \begin{subfigure}{
            %\label{fig-combination-time-var1}
            \includegraphics[scale=0.45]{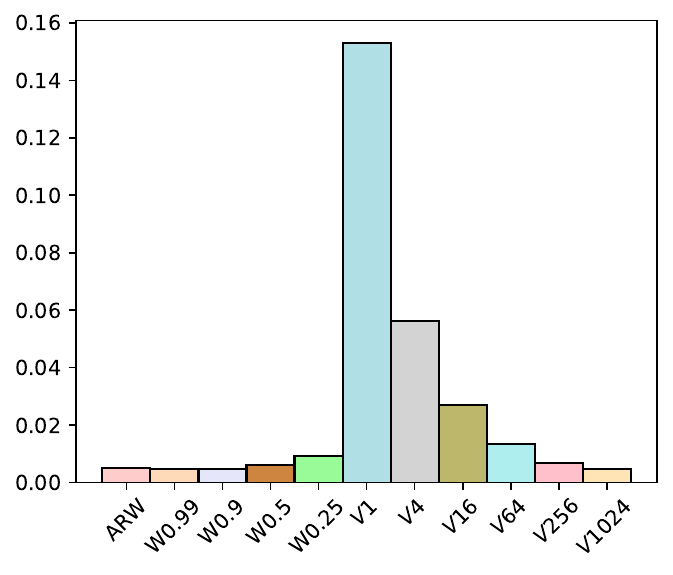}
        }
    \end{subfigure}
    \quad
    \begin{subfigure}{
            %\label{fig-combination-time-var10}
            \includegraphics[scale=0.45]{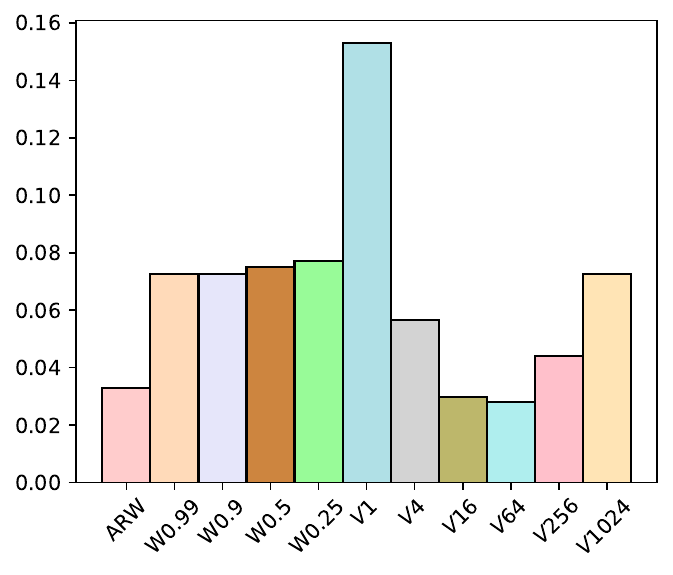}
        }
    \end{subfigure}
\caption{Mean absolute errors of coverage for mean estimation. Left: stationary case. Right: non-stationary case.}
\label{fig:mean-est-bar}
\end{figure}

\paragraph{Linear regression.} We set $\cX = \RR^5$. For each $t\in[T]$, the data distribution is $\distP_t = \distP_{t,\bx} \times \distP_{t,y\mid \bx}$, where $\distP_{t,\bx} = \cN(\bm{0},\bI_5)$ is the marginal distribution of the covariate $\bx$, and $\distP_{t,y\mid \bx} = \cN(\bx^\top\bbeta_t,1)$ is the conditional distribution of $y$ given $\bx$. As in the mean estimation problem, we consider two patterns of $\{\bbeta_t\}_{t=1}^T$. Let $\bm{1}_5 =(1,1,1,1,1)^\top\in\RR^5$. In the stationary pattern, $\bbeta_t = \frac{1}{5} \bm{1}_5$ for all $t\in[T]$. In the non-stationary pattern, $\bbeta_t = 2 \beta_t \bm{1}_5$, where $\{\beta_t\}_{t=1}^T$ follows the non-stationary pattern in the right panel of \Cref{fig:true-means}; see \Cref{sec-random-function} for a detailed description.

In each period $j\in[T]$, we generate i.i.d.~training samples $\datasetB_j^{\tr}$ and i.i.d.~calibration samples $\datasetB_j^{\ca}$ from $\distP_j$. The size $|\datasetB_j^{\ca}|$ of the calibration set $\datasetB_j^{\ca}$ is drawn uniformly over $\{ 1 , \cdots, 9 \}$, and $|\datasetB_j^{\tr}|=3|\datasetB_j^{\ca}|$. At time $t\in[T]$, we run ordinary least squares (OLS) to train a linear prediction model $\widehat{\mu}_{t,k}$, where $k\in\{1,16,256,1024\}$, and $\widehat{\mu}_{t,k}$ uses the training data from the last $k$ periods $\{\datasetB_j^{\tr}\}_{j=t-t\wedge k +1}^t$. We carry out the same procedures as in the mean estimation problem to compute conformity scores and form prediction intervals using $\ARW$, $\cV_k$ and $\cW_{\rho}$. The true coverage level $\cvrg_t = \PP(y_t\in\widehat{C}_t(\bx_t)\mid\widehat{C}_t)$, where $(\bx_t,y_t)\sim\distP_t$, is approximated by drawing another $1000$ independent samples from $\distP_t$.

In Tables \ref{tab:ols-stationary} and \ref{tab:ols-nonstationary}, we report the MAEs of different algorithms over $100$ independent runs. \Cref{fig:linreg-bar} provides a bar plot visualizing the MAEs of different algorithms using the OLS point predictor $\widehat{\mu}_{t,1}$.

\begin{table}[h]
	\vskip 0.15in
	\centering
				\begin{tabular}{lccccccccccr}
					\toprule
					Training\\ Window & ARW & $\mathcal{W}_{0.99}$ & $\mathcal{W}_{0.9}$ & $\mathcal{W}_{0.5}$ & $\mathcal{W}_{0.99}$ & $\mathcal{V}_1$ & $\mathcal{V}_4$ & $\mathcal{V}_{16}$  & $\mathcal{V}_{64}$ & $\mathcal{V}_{256}$ & $\mathcal{V}_{1024}$ \\
					\midrule
					1 & 0.90 & {\bf 0.90} & {\bf 0.90} & 0.96 & 1.15 & 15.44 & 5.69 & 2.80 & 1.53 & 1.00 & {\bf 0.90} \\
					64 & 0.91 & {\bf 0.90} & {\bf 0.90} & 0.96 & 1.16 & 15.43 & 5.68 & 2.80 & 1.53 & 1.00 & {\bf 0.90} \\
					256 & 0.90 & 0.90 & {\bf 0.89} & 0.96 & 1.15 & 15.44 & 5.68 & 2.79 & 1.52 & 1.00 & 0.90 \\
					1024 & 0.91 & {\bf 0.90} & {\bf 0.90} & 0.96 & 1.15 & 15.44 & 5.67 & 2.79 & 1.52 & 1.00 & {\bf 0.90} \\
					\bottomrule
				\end{tabular}
	\vskip 0.1in
	\caption{Mean absolute error (in \%) of coverage for stationary linear regression.}
	\label{tab:ols-stationary}
\end{table}

\begin{table}[h]
	\vskip 0.15in
	\centering
				\begin{tabular}{lccccccccccr}
					\toprule
					Training \\ Window & ARW & $\mathcal{W}_{0.99}$ & $\mathcal{W}_{0.9}$ & $\mathcal{W}_{0.5}$ & $\mathcal{W}_{0.99}$ & $\mathcal{V}_1$ & $\mathcal{V}_4$ & $\mathcal{V}_{16}$  & $\mathcal{V}_{64}$ & $\mathcal{V}_{256}$ & $\mathcal{V}_{1024}$ \\
					\midrule
					1 & 3.60 & 8.69 & 8.67 & 8.49 & 8.33 & 15.46 & 5.77 & {\bf 3.29} & 3.61 & 5.84 & 8.69 \\
					64 & 3.63 & 8.63 & 8.61 & 8.44 & 8.26 & 15.44 & 5.77 & {\bf 3.29} & 3.62 & 5.79 & 8.63 \\
					256 & 3.69 & 8.36 & 8.34 & 8.15 & 7.97 & 15.45 & 5.77 & {\bf 3.28} & 3.64 & 5.58 & 8.36 \\
					1024 & 3.75 & 8.07 & 8.06 & 7.88 & 7.71 & 15.45 & 5.76 & {\bf 3.30} & 3.67 & 5.50 & 8.07 \\
					\bottomrule
				\end{tabular}
	\vskip 0.1in
	\caption{Mean absolute error (in \%) of coverage for nonstationary linear regression.}
	\label{tab:ols-nonstationary}
\end{table}

\begin{figure}[h]
	\centering
	\begin{subfigure}{
			%\label{fig-combination-time-var1}
			\includegraphics[scale=0.45]{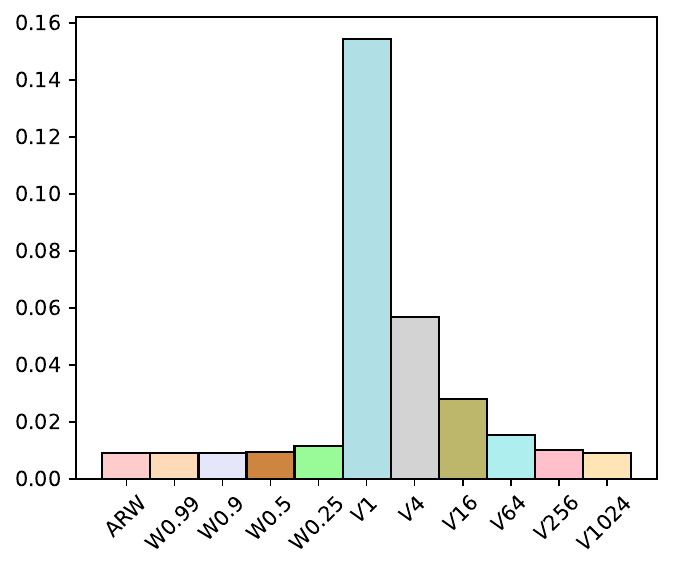}
		}
	\end{subfigure}
	\quad
	\begin{subfigure}{
			%\label{fig-combination-time-var10}
			\includegraphics[scale=0.45]{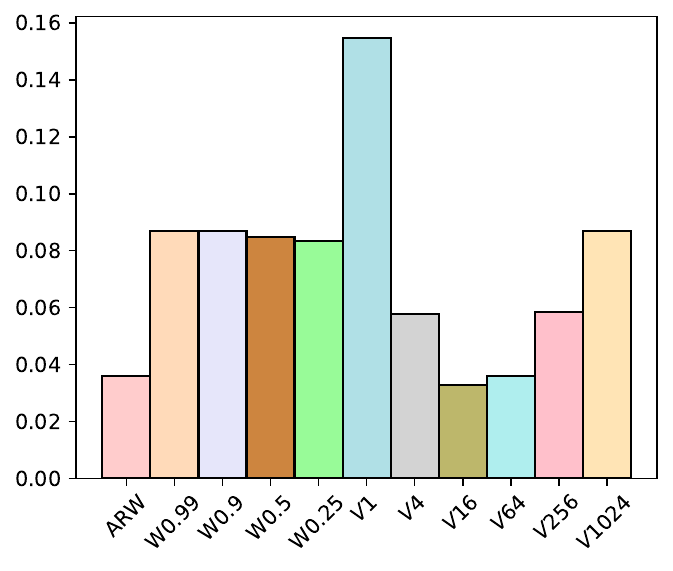}
		}
	\end{subfigure}
	\caption{Mean absolute errors of coverage for linear regression. Left: stationary case. Right: non-stationary case.}
	\label{fig:linreg-bar}
\end{figure}

\paragraph{Summary.} The results from mean estimation and linear regression show that $\mathrm{ARW}$ consistently achieves superior performances in both stationary and non-stationary settings.

\subsection{Real data}

Finally, we test our method using a real estate dataset maintained by the Dubai Land Department.\footnote{\url{https://www.dubaipulse.gov.ae/data/dld-transactions/dld_transactions-open}} 
We study the sales of apartments during the past 16 years (from January 1st, 2008 to December 31st, 2023).
There are 211,432 samples (sales) in total, and we want to predict the final price given characteristics of an apartment (e.g.,~number of rooms,~size,~location). Each week is treated as a time period, and there are 826 of them in total. 
\Cref{fig-housing} in \Cref{sec-housing-details} shows the weekly average of logarithmic prices, illustrating the distribution shift over time.
%On average, each week has around 250 samples. 
We split week $t$'s data into a training set $\datasetB_t^{\tr}$, a calibration set $\datasetB_t^{\ca}$, and a test set $\datasetB_t^{\rm te}$ with proportions $30\%$, $10\%$, and $60\%$, respectively. Most samples are allocated to the test set so that the coverage probability can be estimated accurately. We follow the standard practice to apply a logarithmic transform to the price and target that in our prediction. See \Cref{sec-housing-details} for other details of preprocessing.

For each $t \in \{ 10, 20, 30, \cdots, 820 \}$, we run XGBoost \citep{CGu16} on 4 different windows of historical training data $\{ \datasetB_j^{\tr} \}_{j=t - k\wedge t + 1  }^t$ with $k \in \{ 1, 16, 256, 1024 \}$, to get 4 point prediction models. For each of them, we compute its associated conformity scores on the calibration data $\{  \datasetB_j^{\ca} \}_{j=1}^t$, run \Cref{alg-predictive-inference} to obtain a prediction interval $\Chat_t$, and compute its coverage frequency on the test set $\datasetB_t^{\rm te}$. We also compare our adaptive rolling window approach to $\cV_k$ and $\cW_{\rho}$. We run the experiment independently over $100$ random seeds and report the average results in \Cref{table-housing-MAE,table-housing-AW}. For all of the $4$ training strategies, our approach consistently achieves the nominal coverage level with small average interval width. It outperforms weighting methods and is comparable to the best fixed-window method.

\begin{table}[h]
	\vskip 0.15in
	\centering
				\begin{tabular}{lccccccccccr}
					\toprule
					Training\\ Window & ARW & $\mathcal{W}_{0.99}$ & $\mathcal{W}_{0.9}$ & $\mathcal{W}_{0.5}$ & $\mathcal{W}_{0.25}$ & $\mathcal{V}_1$ & $\mathcal{V}_4$ & $\mathcal{V}_{16}$  & $\mathcal{V}_{64}$ & $\mathcal{V}_{256}$ & $\mathcal{V}_{1024}$ \\
					\midrule
					1 & 3.44  & 4.47 & 4.47 & 4.52 & 4.58 & 6.26 & 3.67 & {\bf 3.00} & 3.18 & 4.12 & 4.46  \\
					16 & 3.13 & 6.18 & 6.19 & 6.26 & 6.35 & 6.33 & 3.80 & {\bf 2.97} & 3.00 & 5.24 & 6.18  \\
					256 & 3.28 & 4.43 & 4.43 & 4.48 & 4.53 & 6.25 & 3.72 & {\bf 3.00} & 3.76 & 5.54 & 4.43 \\
					1024 & 3.10 & 3.63 & 3.63 & 3.64 & 3.68 & 6.34 & 3.73 & {\bf 3.06} & 3.29 & 3.84 & 3.63  \\
					\bottomrule
				\end{tabular}
	\vskip 0.1in
	\caption{Mean absolute error (in \%) of coverage on the housing data.}
		\label{table-housing-MAE}
\end{table}

\begin{table}[h]
	\vskip 0.15in
	\centering
				\begin{tabular}{lccccccccccr}
					\toprule
					Training\\ Window & ARW & $\mathcal{W}_{0.99}$ & $\mathcal{W}_{0.9}$ & $\mathcal{W}_{0.5}$ & $\mathcal{W}_{0.25}$ & $\mathcal{V}_1$ & $\mathcal{V}_4$ & $\mathcal{V}_{16}$  & $\mathcal{V}_{64}$ & $\mathcal{V}_{256}$ & $\mathcal{V}_{1024}$ \\
					\midrule
					1 & 1.58 & 1.70 & 1.70 & 1.70 & 1.71 & 1.37 & 1.47 & 1.49 & 1.55 & 1.65 & 1.70  \\
					16 & 1.10 & 1.41 & 1.41 & 1.42 & 1.43 & 0.98 & 0.98 & 0.97 & 1.09 & 1.30 & 1.41 \\
					256 & 0.98 & 1.15 & 1.15 & 1.16 & 1.17 & 0.97 & 0.99 & 0.97 & 0.96 & 0.98 & 1.15 \\
					1024 & 1.03 & 1.09 & 1.09 & 1.10 & 1.10 & 1.01 & 1.03 & 1.03 & 1.03 & 1.05 & 1.09 \\
					\bottomrule
				\end{tabular}
	\vskip 0.1in
	\caption{Mean interval width on the housing data.}
	\label{table-housing-AW}
\end{table}

\section{Discussions}\label{sec-discussions}

We develop rolling window strategies for quantile estimation and predictive inference under temporal drift. Theoretical analyses and numerical experiments demonstrate their adaptivity to the unknown distribution shift. This work opens up a number of future directions. 
Our approach assumes decoupled training and validation procedures. It is worth investigating how our methods can be extended to procedures that make more efficient use of data, such as full conformal prediction \citep{VGS05}, jackknife$+$ and CV$+$ \citep{BCR21}. 
Another feature of our work is that no structural assumption on the temporal drift is needed. In practice, certain drift patterns can be learned from data or domain knowledge, such as seasonalities and trends. How to utilize those structures for better use of data while safeguarding against the risk of misspecification is of great importance.

\section*{Acknowledgement}
Elise Han, Chengpiao Huang, and Kaizheng Wang's research is supported by an NSF grant DMS-2210907 and a startup grant at Columbia University.

\newpage 
\appendix

\section{Proofs}\label{sec-proofs}

\subsection{Proof of \Cref{thm-bias-variance-decomp}}\label{sec-proof-thm-bias-variance-decomp}

Applying \Cref{lem-quantile} to the data $\{ \dataset_j \}_{j=t - k + 1}^t$ yields
\[
	\PP \bigg(
	| 
	F_t(
	\widehat{q}_{t, k} 
	)
	- (1 - \alpha) | \leq  \phi (t, k) +  \min_{\varepsilon > 0} \bigg\{
	\frac{
		\UB (B_{t, k}, \varepsilon, 1 - \alpha, \delta/2) 
	}{
		1 - \varepsilon
	} \bigg\}
	\bigg)
	\geq 1 - \delta ,
%\label{eqn-thm-Beta}
\]
where the function $\UB$ is defined in \Cref{defn-UB}. The proof is finished by taking $\varepsilon = 1/5$.

\subsection{Proof of \Cref{thm-bias-variance-decomp-proxy}}\label{sec-proof-thm-bias-variance-decomp-proxy}

Choose any $\delta \in (0, 1)$, $\varepsilon > 0$ and $\eta > 0$. Let $\UB(n, \varepsilon, \gamma, \delta) $ be defined as in \Cref{defn-UB} with $\gamma = 1 - \alpha$. We will show that with probability at least $1 - \delta$,
\begin{align}
	| 
	\widehat{F}_{t, i} (  \widehat{q}_{t, k} ) 
	- (1 - \alpha) | \leq 
	2 (1 + \eta)	\phi (t, k) + 
	\frac{
		1 + \eta
	}{
		1 - \varepsilon
	}
	\UB (B_{t, k}, \varepsilon, 1-\alpha, \delta/4) 
	+ \UB ( B_{t, i}, \eta, 1-\alpha, \delta / 4 ).
	\label{eqn-thm-double-hat}
\end{align}
If this is true, we can take $\eta = \varepsilon = 1/5$ and get
\begin{align*}
		| 
	\widehat{F}_{t, i} (  \widehat{q}_{t, k} ) 
	- (1 - \alpha) | 
	& \leq 
	%2 (1 + \eta)	\phi (t, k) + 
	%\frac{
		%	1 + \eta
		%}{
		%	1 - \varepsilon
		%}
	%\UB (B_{t, k}, \varepsilon, \delta/4) 
	%+ \UB ( B_{t, i}, \eta, \delta / 4 ) \\
	%& =
	\frac{12}{5} \phi (t, k) + \frac{3}{2}\UB (B_{t, k}, 1/5, 1-\alpha, \delta/4) + \UB ( B_{t, i}, 1/5, 1-\alpha, \delta / 4 ) \\
	& \le \frac{12}{5} \phi (t, k) +  \frac{6}{5} \psi (t, k, \delta/2) + \frac{4}{5} \psi (t, i, \delta/2).
\end{align*}

We now prove \eqref{eqn-thm-double-hat}. Let
\[
\psi_1  =
\frac{
	\UB (B_{t, k}, \varepsilon, 1-\alpha, \delta/4) 
}{
	1 - \varepsilon
}.
\]
Denote by $\cA$ the event that
\[
| 
F_t(
\widehat{q}_{t, k} 
)
- (1 - \alpha) | \leq  \phi (t, k) +  \psi_1  .
\]
According to \Cref{thm-bias-variance-decomp}, $\PP (\cA) \geq 1 - \delta / 2$. Let $r = \phi (t, k) +  \psi_1  $. When $\cA$ happens, we have
\begin{align*}
	\Qleft_{1 - \alpha - r} ( F_t ) \leq  \widehat{q}_{t, k} \leq \Qright_{1 - \alpha + r} ( F_t ) 
\end{align*}
and hence,
\[
	\PP \bigg(
	\widehat{F}_{t, i} \big( \Qleft_{ 1 - \alpha - r} ( F_t ) \big) 
	\leq \widehat{F}_{t, i} (  \widehat{q}_{t, k} ) \leq 
	\widehat{F}_{t, i} \big(
	\Qright_{ 1 - \alpha + r} ( F_t ) \big)
	\bigg)
	\geq 1 - \delta / 2.
	\label{eqn-thm-double-hat-1}
\]

Let $q^+ = \Qright_{1 - \alpha + r} ( F_t )$, which is deterministic. Then
\[\widehat{F}_{t, i} (q^+) = \frac{1}{B_{t, i}} \sum_{j = t - i + 1}^{t} \sum_{ \ell = 1 }^{ B_j } \bm{1} ( u_{j, \ell} \le q^+ ) 
	\quad\text{and}\quad 
	\EE \big[ \widehat{F}_{t, i} (q^+) \big]  = F_{t, i} (q^+)  .
\]
Note that $\widehat{F}_{t,i}(q^+)$ is an average of independent Bernoulli random variables. We will apply \Cref{lem-Bernstein} to obtain concentration of $\widehat{F}_{t,i}(q^+)$ around $F_{t,i}(q^+)$. By Assumption \ref{assumption-AC}, $F_t$ is absolutely continuous, so $F_t (q^+) = \min \{ 1 - \alpha + r , 1 \} = (1-\alpha) + \min \{ r , \alpha \}$. Hence,
\[
| \PP ( u_{j, \ell} \leq q^+  ) - (1-\alpha) | \leq | F_j(q^+) - F_t(q^+) | + \min \{ r , \alpha \} \leq  \phi (t, i) + r , \qquad t - i + 1 \leq j \leq t.
\]
Choose any $\delta \in (0, 1)$ and $\eta > 0$. By \Cref{lem-Bernstein},
\begin{equation}
\PP\left(
\widehat{F}_{t, i} (q^+) - F_{t, i} (q^+) 
\leq 
\eta [  \phi (t, i) + r ]  + 
\UB ( B_{t, i}, \eta, 1-\alpha, \delta / 4 ) \right)
\ge 1-\frac{\delta}{4}.
\label{eqn-thm-double-hat-2}
\end{equation}
Since
\[
F_{t, i} (q^+) \leq F_{t} (q^+)  + \phi (t, i) \leq  1 - \alpha + r +  \phi (t, i),
\]
then \eqref{eqn-thm-double-hat-2} implies
\[
\PP\left(\widehat{F}_{t, i} (\Qright_{q + r} ( F_t )) - (1 - \alpha)
	\leq 
	(1 + \eta) [ r + \phi (t, i) ] + \UB ( B_{t, i}, \eta, 1-\alpha, \delta / 4 ) \right)
	\ge 1-\frac{\delta}{4}.
\]

Similarly, we can prove the same tail bound for $- [ \widehat{F}_{t, i} (\Qleft_{q - r} ( F_t )) - (1 - \alpha) ]$. By union bounds, the following inequalities hold simultaneously with probability at least $ 1 - \delta / 2$:
\begin{align*}
	&  \widehat{F}_{t, i} (\Qright_{q + r} ( F_t )) - (1 - \alpha) 
	\leq 
	(1 + \eta) [ r + \phi (t, i) ] + \UB ( B_{t, i}, \eta, 1-\alpha, \delta / 4 ) , \\[4pt]
	&   \widehat{F}_{t, i} (\Qleft_{q - r} ( F_t )) - (1 - \alpha) \geq - \left\{
	(1 + \eta) [ r + \phi (t, i) ] + \UB ( B_{t, i}, \eta, 1-\alpha, \delta / 4 ) 
	\right\}.
\end{align*}
Combining this with \eqref{eqn-thm-double-hat-1}, we see that with probability at least $1 - \delta$,
\begin{align*}
	| 
	\widehat{F}_{t, i} (  \widehat{q}_{t, k} ) 
	- (1 - \alpha) | 
	& \leq 
	(1 + \eta) [	\phi (t, k) +  \psi_1    + \phi (t, i) ] 
	+ \UB ( B_{t, i}, \eta, 1-\alpha, \delta / 4 ) \\[4pt]
	& \leq (1 + \eta) \bigg(
	2 	\phi (t, k) + 
	\frac{
		\UB (B_{t, k}, \varepsilon, 1-\alpha, \delta/4) 
	}{
		1 - \varepsilon
	}
	\bigg)
	+ \UB ( B_{t, i}, \eta, 1-\alpha, \delta / 4 ).
\end{align*}
The last inequality is due to $i \leq k$ and thus $\phi (t, i) \leq \phi (t, k)$. We have obtained \eqref{eqn-thm-double-hat}.

\subsection{Proof of \Cref{thm-GL}}\label{sec-proof-thm-GL}
Let $\delta' = \delta/(4t^2)$. We will prove that with probability at least $1-\delta$,
\[
| F_t(\widehat{q}_{t,\widehat{k}}) - (1-\alpha) |  \le 6 \left[ \phi(t,k) + \psi(t,k,\delta') \right],\quad\forall k\in[t].
\]

By \Cref{thm-bias-variance-decomp} and the union bound,
\begin{equation}\label{eqn-proof-GL-event-bias-variance-decomp}
\PP\bigg( | F_t(\widehat{q}_{t, k} )	- (1 - \alpha) | \leq  \phi (t, k) + \psi ( t , k, \delta' ),\ \forall k\in[t] \bigg) \ge 1-t\delta'.
\end{equation}
By \Cref{thm-bias-variance-decomp-proxy} and union bounds,
\begin{equation}\label{eqn-proof-GL-event-phi-hat-dominance}
\PP \left( 0 \le \widehat{\phi}(t,k,\delta') \le \phi(t,k),\ \forall k\in[t] \right) \ge 1-t^2\delta'.
\end{equation}
For all $k\in[t]$, deterministically,
\begin{equation}\label{eqn-proof-GL-psi-dominance}
\psi(t,k,\delta'/2) \le 2\psi(t,k,\delta').
\end{equation}
From now on, assume that both the events in \eqref{eqn-proof-GL-event-bias-variance-decomp} and \eqref{eqn-proof-GL-event-phi-hat-dominance} happen. 

For all $k\in\{\widehat{k}+1,...,t\}$,
\begin{align}
|F_t(\widehat{q}_{t,\widehat{k}}) - (1-\alpha)|
&\le \phi(t,\widehat{k}) + \psi(t,\widehat{k},\delta')  \tag{by \eqref{eqn-proof-GL-event-bias-variance-decomp}}  \\[4pt]
&\le \phi(t,k) + \widehat{\phi}(t,\widehat{k},\delta') + \psi(t,\widehat{k},\delta') \tag{by \eqref{eqn-proof-GL-event-phi-hat-dominance}} \\[4pt]
&\le \phi(t,k) + \widehat{\phi}(t,k,\delta') + \psi(t,k,\delta') \tag{by the definition of $\widehat{k}$} \\[4pt]
&\le \phi(t,k) + \phi(t,k) + \psi(t,k,\delta') \tag{by \eqref{eqn-proof-GL-event-phi-hat-dominance}} \\[4pt]
&= 2 \phi(t,k) + \psi(t,k,\delta'). \label{eqn-proof-GL-large}
\end{align}
Next, take arbitrary $k\in[\widehat{k}]$. Then
\begin{align}
|\widehat{F}_{t,k}(\widehat{q}_{t,\widehat{k}}) - (1-\alpha)|
&\le 
\frac{12}{5} \widehat{\phi}(t,\widehat{k},\delta') + \frac{6}{5}\psi(t,\widehat{k},\delta'/2) + \frac{4}{5}\psi(t,k,\delta'/2) \tag{by the definition of $\widehat{\phi}$} \\[4pt]
&\le 
\frac{12}{5} \widehat{\phi}(t,\widehat{k},\delta') + \frac{12}{5}\psi(t,\widehat{k},\delta') + \frac{8}{5}\psi(t,k,\delta') \tag{by \eqref{eqn-proof-GL-psi-dominance}} \\[4pt]
&\le 
\frac{12}{5} \widehat{\phi}(t,k,\delta') + \frac{12}{5}\psi(t,k,\delta') + \frac{8}{5}\psi(t,k,\delta') \tag{by the definition of $\widehat{k}$} \\[4pt]
&\le 
\frac{12}{5} \phi(t,k) + 4\psi(t,k,\delta'). \tag{by \eqref{eqn-proof-GL-event-phi-hat-dominance}}
\end{align}
Let $r_k = \frac{12}{5} \phi(t,k) + 4\psi(t,k,\delta')$. Then $|\widehat{F}_{t,k}(\widehat{q}_{t,\widehat{k}}) - (1-\alpha)| \le r_k$ implies
\[
\Qleft_{1-\alpha - r_k} \big( \widehat{F}_{t,k} \big) \le \widehat{q}_{t,\widehat{k}} \le \Qright_{1-\alpha + r_k} \big( \widehat{F}_{t,k} \big).
\]
Therefore,
\begin{align*}
F_t(\widehat{q}_{t,\widehat{k}})
\le 
F_t\left( \Qright_{1-\alpha + r_k} \big( \widehat{F}_{t,k} \big) \right) 
&=
F_t\Bigg( \Qright_{1-\alpha + r_k} \bigg( \frac{1}{B_{t,k}} \sum_{j=t-k}^{t-1} \sum_{i=1}^{B_j} \delta_{u_{j,i}} \bigg) \Bigg) \\[4pt]
&=
\Qright_{1-\alpha + r_k} \bigg( \frac{1}{B_{t,k}} \sum_{j=t-k}^{t-1} \sum_{i=1}^{B_j} \delta_{ F_t( u_{j,i} ) } \bigg) \\[4pt]
&\le 
\Qright_{1-\alpha + r_k} \bigg( \frac{1}{B_{t,k}} \sum_{j=t-k}^{t-1} \sum_{i=1}^{B_j} \delta_{ F_j( u_{j,i} ) } \bigg) + \phi(t,k). \tag{by \Cref{lem-max-diff-order-stats} }
\end{align*}
By \Cref{lem-Beta-concentration} and the union bound,
\begin{multline}\label{eqn-proof-GL-event-quantile-concentration-1}
\PP \left( \Qright_{1-\alpha + r_k} \bigg( \frac{1}{B_{t,k}} \sum_{j=t-k}^{t-1} \sum_{i=1}^{B_j} \delta_{ F_j( u_{j,i} ) } \bigg)
\le 
( 1-\alpha + r_k ) + \frac{r_k}{4} + \frac{5}{4} \UB (B_{t,k},1/5,1-\alpha,\delta'),\ \forall k\in[t] \right) \\
\ge 1-t\delta',
\end{multline}
where $\UB(n, \varepsilon, \gamma, \delta) $ is defined in \Cref{defn-UB}. Similarly,
\begin{align*}
F_t(\widehat{q}_{t,\widehat{k}})
\ge 
F_t\left( \Qleft_{1-\alpha - r_k} \big( \widehat{F}_{t,k} \big) \right) 
&=
F_t\Bigg( \Qleft_{1-\alpha - r_k} \bigg( \frac{1}{B_{t,k}} \sum_{j=t-k}^{t-1} \sum_{i=1}^{B_j} \delta_{u_{j,i}} \bigg) \Bigg) \\[4pt]
&=
\Qleft_{1-\alpha - r_k} \bigg( \frac{1}{B_{t,k}} \sum_{j=t-k}^{t-1} \sum_{i=1}^{B_j} \delta_{ F_t( u_{j,i} ) } \bigg) \\[4pt]
&\ge 
\Qleft_{1-\alpha - r_k} \bigg( \frac{1}{B_{t,k}} \sum_{j=t-k}^{t-1} \sum_{i=1}^{B_j} \delta_{ F_j( u_{j,i} ) } \bigg) - \phi(t,k). \tag{by \Cref{lem-max-diff-order-stats} }
\end{align*}
By \Cref{lem-Beta-concentration} and the union bound,
\begin{multline}\label{eqn-proof-GL-event-quantile-concentration-2}
\PP \left( \Qleft_{1-\alpha - r_k} \bigg( \frac{1}{B_{t,k}} \sum_{j=t-k}^{t-1} \sum_{i=1}^{B_j} \delta_{ F_j( u_{j,i} ) } \bigg)
\ge 
( 1-\alpha-r_k ) - \frac{r_k}{4} - \frac{5}{4} R(B_{t,k},1/5,1-\alpha,\delta'),\ \forall k\in[t] \right) \\
\ge 1-t\delta'.
\end{multline}
When the events in \eqref{eqn-proof-GL-event-quantile-concentration-1} and \eqref{eqn-proof-GL-event-quantile-concentration-2} happen,
\begin{align}
| F_t(\widehat{q}_{t,\widehat{k}}) - (1-\alpha) | 
&\le 
\frac{5}{4} r_k + \frac{5}{4} R(B_{t,k},1/5, 1-\alpha, \delta') + \phi(t,k) \tag{by \eqref{eqn-proof-GL-event-quantile-concentration-1} and \eqref{eqn-proof-GL-event-quantile-concentration-2}} \\[4pt]
&\le
\left( 3 \phi(t,k) + 5\psi(t,k,\delta') \right) + \psi(t,k,\delta') + \phi(t,k) \notag \\[4pt]
&=
4\phi(t,k) + 6 \psi(t,k,\delta'). \label{eqn-proof-GL-small}
\end{align}

Therefore, when the events in \eqref{eqn-proof-GL-event-bias-variance-decomp}, \eqref{eqn-proof-GL-event-phi-hat-dominance}, \eqref{eqn-proof-GL-event-quantile-concentration-1} and \eqref{eqn-proof-GL-event-quantile-concentration-2} happen, which has probability at least $1-(3t+t^2)\delta' \ge 1-\delta$, we have \eqref{eqn-proof-GL-large} and \eqref{eqn-proof-GL-small}, and hence
\[
| F_t(\widehat{q}_{t,\widehat{k}}) - (1-\alpha) |  \le 6 \min_{k\in[t]} \left\{ \phi(t,k) + \psi(t,k,\delta')\right\}.
\]

\section{Technical lemmas}\label{sec-technical}

\begin{lemma}[Perturbation bound for empirical quantile]\label{lem-max-diff-order-stats}
Let $\{ x_i \}_{i=1}^n$ and $\{ y_i \}_{i=1}^n$ be real numbers. We have
\begin{align*}
&\max_{\gamma \in ( 0, 1) } \bigg|
\Qleft_{\gamma} 
\bigg(
\frac{1}{n}  \sum_{i=1}^{n} \delta_{ x_i }
\bigg)
- \Qleft_{\gamma} 
\bigg(
\frac{1}{n}  \sum_{i=1}^{n} \delta_{ y_i }
\bigg)
\bigg|
\leq \max_{i \in [n]} | x_i - y_i | .
\end{align*}
The same bound holds if we replace $\Qleft$ by $\Qright$. 
\end{lemma}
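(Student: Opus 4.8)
The plan is to reduce the statement to a pointwise comparison of empirical CDFs. Write $\epsilon = \max_{i\in[n]}|x_i-y_i|$ and let $F$ and $G$ be the CDFs of $\frac1n\sum_{i=1}^n\delta_{x_i}$ and $\frac1n\sum_{i=1}^n\delta_{y_i}$, respectively. The first step is the elementary observation that $x_i\le t$ forces $y_i\le x_i+\epsilon\le t+\epsilon$, so $\#\{i:x_i\le t\}\le\#\{i:y_i\le t+\epsilon\}$, i.e. $F(t)\le G(t+\epsilon)$ for all $t\in\RR$; by symmetry $G(t)\le F(t+\epsilon)$ as well.

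The second step converts this into a quantile bound. Fix $\gamma\in(0,1)$ and set $q=\Qleft_\gamma(F)$. Since $F$ is a right-continuous nondecreasing step function with finitely many jumps, the set $\{t:F(t)\ge\gamma\}$ is exactly $[q,\infty)$, so in particular $F(q)\ge\gamma$. Then $G(q+\epsilon)\ge F(q)\ge\gamma$, hence $q+\epsilon$ lies in $\{t:G(t)\ge\gamma\}$ and therefore $\Qleft_\gamma(G)\le q+\epsilon$. Exchanging the roles of $x$ and $y$ gives the reverse inequality, so $|\Qleft_\gamma(F)-\Qleft_\gamma(G)|\le\epsilon$; taking the supremum over $\gamma\in(0,1)$ yields the claim for $\Qleft$. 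For the right quantile one repeats the argument verbatim, replacing $\{t:F(t)\ge\gamma\}$ by $\{t:F(t)>\gamma\}$, which equals $[q',\infty)$ with $q'=\Qright_\gamma(F)$ and $F(q')>\gamma$, so that $G(q'+\epsilon)\ge F(q')>\gamma$ and $\Qright_\gamma(G)\le q'+\epsilon$.

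The only point requiring any care — and really the sole nontrivial step — is justifying that the defining infimum is attained, i.e. $F(q)\ge\gamma$ for $\Qleft_\gamma$ and $F(q')>\gamma$ for $\Qright_\gamma$. This follows from right-continuity together with the fact that an empirical CDF is locally constant immediately to the right of any point (there are only finitely many atoms), so I would spell out that one-line argument and then the rest is bookkeeping. An alternative that sidesteps CDFs entirely is to note that $\Qleft_\gamma$ and $\Qright_\gamma$ of an $n$-atom empirical measure are the order statistics $x_{(\lceil n\gamma\rceil)}$ and $x_{(\lfloor n\gamma\rfloor+1)}$, with the same index for the $y$'s, and then invoke the classical fact that sorting is $1$-Lipschitz in $\ell_\infty$ norm ($|x_{(k)}-y_{(k)}|\le\epsilon$ for every $k$, since at least $k$ of the $x_i$ lie at or below $y_{(k)}+\epsilon$). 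Both routes are short; I would present the CDF version, since it avoids having to separately identify the quantiles as order statistics.
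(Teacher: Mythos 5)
Your proof is correct, but it takes a different route from the paper's. The paper sorts both sequences, observes that for each $\gamma$ the two left $\gamma$-quantiles are the order statistics of the \emph{same} rank $m$, and then finds by pigeonhole an index $j$ that simultaneously lies among the top $n-m+1$ positions of the $x$'s and the bottom $m$ positions of the $y$'s, so that $x_{(m)}-y_{(m)}\le x_j-y_j\le\max_i|x_i-y_i|$; this is exactly the ``sorting is $1$-Lipschitz in $\ell_\infty$'' fact you mention as your alternative, proved combinatorially. Your primary argument instead works at the level of empirical CDFs: from $\{i:x_i\le t\}\subseteq\{i:y_i\le t+\epsilon\}$ you get $F(t)\le G(t+\epsilon)$, and then transfer this to quantiles via the attainment of the defining infima, which you correctly flag as the only delicate point (right-continuity gives $F(q)\ge\gamma$ at $q=\Qleft_{\gamma}(F)$, and local constancy to the right gives $F(q')>\gamma$ at $q'=\Qright_{\gamma}(F)$). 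Both arguments are complete and of comparable length. Your CDF version has the advantage of extending verbatim beyond empirical measures --- it really only uses that the two CDFs dominate each other after an $\epsilon$-shift --- and it handles $\Qleft$ and $\Qright$ uniformly; the paper's version is more self-contained in that it never needs to discuss right-continuity or whether infima are attained, at the cost of a WLOG and a pigeonhole count. Either would be acceptable here.
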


\begin{proof}[\bf Proof of \Cref{lem-max-diff-order-stats}]
Take permutations $(k_1,...,k_n)$ and $(\ell_1,...,\ell_n)$ of $[n]$ such that $x_{k_1}\le\cdots\le x_{k_n}$ and $y_{\ell_1}\le\cdots\le y_{\ell_n}$. Fix $\gamma\in(0,1)$. There exists $m\in[n]$ such that
\[
x_{k_m} = \Qleft_{\gamma} 
\bigg(
\frac{1}{n}  \sum_{i=1}^{n} \delta_{ x_i }
\bigg)
\quad\text{and}\quad
y_{\ell_m} = \Qleft_{\gamma} 
\bigg(
\frac{1}{n}  \sum_{i=1}^{n} \delta_{ y_i }
\bigg).
\]
Without loss of generality, assume $x_{k_m} \ge y_{\ell_m}$. By the pigeonhole principle, the set $S = \{k_i: m\le i\le n\} \cap \{\ell_i: 1\le i\le m\}$ is nonempty. Take $j\in S$, then $x_{k_m} \le x_j$ and $y_{\ell_m} \ge y_j$. Therefore,
\[
| x_{k_m} - y_{\ell_m} | = x_{k_m} - y_{\ell_m} \le x_j - y_j \le \max_{i\in[n]} |x_i - y_i|.
\]
Taking maximum over all $\gamma\in(0,1)$ finishes the proof.
\end{proof}

\begin{lemma}[Bernstein's bound with perturbation]\label{lem-Bernstein}
Let $\{ x_i \}_{i=1}^n$ be independent Bernoulli random variables, $p_i = \EE x_i$ and $\bar{p} = \frac{1}{n} \sum_{i=1}^{n} p_i$. For any $r > 0$, $\varepsilon > 0$, $\delta \in (0, 1)$ and $q \in [ 0, 1]$ that satisfies $|q - \bar{p}| \leq r$, the following holds with probability at least $1 - \delta$:
\[
\frac{1}{n} \sum_{i=1}^{n} x_i - \bar{p}
<  \varepsilon r + \sqrt{ \frac{ 2 q ( 1 - q ) \log (1 / \delta) }{ n } } + \bigg( \frac{2}{3} + \frac{1}{2 \varepsilon} \bigg) \frac{\log ( 1 / \delta ) }{ n }.
\]
The same upper bound also holds for $\bar{p} - \frac{1}{n} \sum_{i=1}^{n} x_i $ with probability at least $1 - \delta$.
\end{lemma}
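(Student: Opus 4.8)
The plan is to apply the classical Bernstein inequality to the centered sum $S = \sum_{i=1}^n (x_i - p_i)$ and then post-process the resulting tail bound. The summands are independent, have mean zero, are bounded in absolute value by $1$, and have total variance $V := \sum_{i=1}^n \var(x_i) = \sum_{i=1}^n p_i(1-p_i)$; hence Bernstein's inequality gives $\PP\big(S \ge \sqrt{2V\log(1/\delta)} + \tfrac{2}{3}\log(1/\delta)\big) \le \delta$. Dividing through by $n$, I get that with probability at least $1-\delta$,
\[
\frac1n\sum_{i=1}^n x_i - \bar p < \sqrt{\frac{2(V/n)\log(1/\delta)}{n}} + \frac{2\log(1/\delta)}{3n}.
\]

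The remaining work is to replace $V/n = \frac1n\sum_{i=1}^n p_i(1-p_i)$ by a quantity involving $q(1-q)$ and $r$. First I would use that $u\mapsto u(1-u)$ is concave on $[0,1]$, so Jensen's inequality yields $V/n \le \bar p(1-\bar p)$; then, since this map is $1$-Lipschitz on $[0,1]$ (its derivative $1-2u$ has absolute value at most $1$) and $|q - \bar p| \le r$, I get $\bar p(1-\bar p) \le q(1-q) + r$. Plugging this in and using $\sqrt{a+b} \le \sqrt a + \sqrt b$ gives
\[
\sqrt{\frac{2(V/n)\log(1/\delta)}{n}} \le \sqrt{\frac{2q(1-q)\log(1/\delta)}{n}} + \sqrt{\frac{2r\log(1/\delta)}{n}}.
\]
Finally, the weighted AM–GM inequality $\sqrt{ab}\le\tfrac12(a+b)$ with $a = 2\varepsilon r$ and $b = \log(1/\delta)/(\varepsilon n)$ gives $\sqrt{2r\log(1/\delta)/n} \le \varepsilon r + \log(1/\delta)/(2\varepsilon n)$. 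Combining the three displays produces exactly the asserted bound.

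The lower-tail statement follows by symmetry: applying everything above to the Bernoulli variables $\{1-x_i\}_{i=1}^n$, whose means are $1-p_i$ with average $1-\bar p$, turns $\bar p - \tfrac1n\sum_{i=1}^n x_i$ into the upper deviation, while $q(1-q)$, $r$, and the constraint $|(1-q)-(1-\bar p)| = |q-\bar p|\le r$ are all unchanged. I do not expect any genuine obstacle here; the one point needing care is the variance bound, where one must route through the concavity estimate $V/n \le \bar p(1-\bar p)$ before invoking Lipschitz continuity, since the individual $p_i$ need not be close to $q$ and so a termwise comparison of $p_i(1-p_i)$ with $q(1-q)$ is unavailable.
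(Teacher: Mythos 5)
Your proof is correct and follows essentially the same route as the paper's: apply Bernstein's inequality to get the deviation bound with the variance proxy $\sigma^2 = \frac1n\sum p_i(1-p_i)$, use concavity/Jensen to pass to $\bar p(1-\bar p)$, use the $1$-Lipschitzness of $u\mapsto u(1-u)$ together with $|q-\bar p|\le r$ to pass to $q(1-q)+r$, split the square root via $\sqrt{a+b}\le\sqrt a+\sqrt b$, and absorb the $\sqrt{r}$ term by AM--GM. The only cosmetic difference is that the paper invokes Bernstein via a cited lemma while you state it directly.
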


\begin{proof}[\bf Proof of \Cref{lem-Bernstein}]
Below we only prove the upper bound on $\frac{1}{n} \sum_{i=1}^{n} x_i - \bar{p}$. The other bound follows from a similar argument. Let $\bar{x} = \frac{1}{n} \sum_{i=1}^{n} x_i$ and $\sigma^2 = \frac{1}{n} \sum_{i=1}^{n} \var (x_i) = \frac{1}{n} \sum_{i=1}^{n}  p_i (1 - p_i)$. By Lemma 3.1 in \cite{HHW24} (a version of Bernstein's inequality),
\[
\PP \left(
\frac{1}{n} \sum_{i=1}^{n}  x_i - \bar{p}
\leq
\sigma \sqrt{ \frac{ 2 \log (1 / \delta) }{ n } } + \frac{ 2 \log ( 1 / \delta ) }{ 3 n } 
\right) \geq 1 - \delta.
\]
Denote by $\cA$ the event on the left-hand side. By the concavity of the function $x \mapsto x(1-x)$ and Jensen's inequality,
\[
\sigma^2 = \frac{1}{n} \sum_{i=1}^{n}  p_i ( 1 - p_i) \leq   \bar{p} (1 - \bar{p}).
\]
When $\cA$ happens,
\[
\bar{x} - \bar{p} \leq
 \sqrt{ \frac{ 2  \bar{p} (1 - \bar{p}) \log (1 / \delta) }{ n } } + \frac{ 2 \log ( 1 / \delta ) }{ 3 n }.
\]
Let $h(x) = x(1-x)$. We have $|h'(x)| = |1 - 2 x| < 1$ for $x \in (0, 1)$. From $r > 0$ and $|q - \bar{p}| \leq r$, we get $\bar{p} (1-\bar{p}) < q(1-q) +r$ and
	\[
\sqrt{\bar{p} (1-\bar{p}) } < \sqrt{ q(1-q) + r } \leq \sqrt{ q (1-q) } + \sqrt{r}.
	\]
	We have
	\begin{align*}
		 \sqrt{  \frac{ 2 \bar{p} (1 - \bar{p}) \log (1 / \delta) }{ n } }
		& < 
		\sqrt{ \frac{  2 q (1-q) \log (1 / \delta) }{ n } }
		+  \sqrt{ \frac{ 2 r \log (1/ \delta) }{ n } } \\
		& \leq \sqrt{ \frac{  2 q (1-q) \log (1 / \delta) }{ n } }
		+ \frac{1}{2} \bigg(
		\varepsilon \cdot 2r + \frac{1}{ \varepsilon} \cdot \frac{ \log ( 1 / \delta) }{ n }
		\bigg)
 , \qquad \forall \varepsilon > 0.
	\end{align*}
Therefore, on the event $\cA$,
	\[
	\bar{x} - \bar{p} <  \varepsilon r + \sqrt{ \frac{ 2 q ( 1 - q ) \log (1 / \delta) }{ n } } + \bigg( \frac{2}{3} + \frac{1}{2 \varepsilon} \bigg) \frac{\log ( 1 / \delta ) }{ n }.
	\]
Consequently, the above inequality holds with probability at least $1  - \delta$.
\end{proof}

\begin{lemma}[Concentration of empirical quantile with perturbation]\label{lem-Beta-concentration}
Let $\{ x_i \}_{i=1}^n$ be i.i.d.~$\Unif (0, 1)$ random variables. For any $\gamma, \gamma', \delta \in (0, 1)$, with probability at least $1 - \delta$ we have
\[
\Qright_{\gamma} \bigg( \frac{1}{n}  \sum_{i=1}^{n} \delta_{ x_i } \bigg) - \gamma 
\leq
\min_{\varepsilon \in (0,1)} \bigg\{
(1 - \varepsilon)^{-1} \bigg[ \varepsilon | \gamma - \gamma' | + \sqrt{ \frac{ 2 \gamma' ( 1 - \gamma' ) \log (1 / \delta) }{ n } } + \bigg( \frac{2}{3} + \frac{1}{2 \varepsilon} \bigg) \frac{\log ( 1 / \delta ) }{ n } \bigg]
\bigg\} .
\]
The same upper bound also holds for $\gamma - \Qleft_{\gamma} \left( \frac{1}{n}  \sum_{i=1}^{n} \delta_{ x_i } \right)$ with probability at least $1-\delta$.
\end{lemma}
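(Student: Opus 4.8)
The plan is to reduce this statement about an empirical quantile to a one-sided Bernstein concentration bound for an average of Bernoulli random variables, which is exactly what \Cref{lem-Bernstein} provides. Write $\widehat{F}_n(x) = \frac1n \sum_{i=1}^n \ind(x_i \le x)$ for the empirical CDF and $L = \log(1/\delta)$. For a given $\varepsilon \in (0,1)$, let $t_\varepsilon$ denote the expression inside the minimum that corresponds to that $\varepsilon$; by construction it satisfies
\[
(1-\varepsilon)\, t_\varepsilon = \varepsilon \, |\gamma - \gamma'| + \sqrt{ 2\gamma'(1-\gamma') L / n } + \big( 2/3 + 1/(2\varepsilon) \big) L / n,
\]
and $t_\varepsilon > 0$. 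Since the minimum is over the deterministic parameter $\varepsilon$ and is attained (the right-hand side is continuous in $\varepsilon$ and blows up at both endpoints of $(0,1)$), it suffices to fix an arbitrary $\varepsilon \in (0,1)$ and prove that each of $\Qright_\gamma(\widehat{F}_n) - \gamma$ and $\gamma - \Qleft_\gamma(\widehat{F}_n)$ is at most $t_\varepsilon$ with probability at least $1-\delta$; specializing $\varepsilon$ to a minimizer then gives the lemma.

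For the right-quantile bound: if $\gamma + t_\varepsilon \ge 1$, the inequality holds deterministically since $\Qright_\gamma(\widehat{F}_n) \le \max_i x_i \le 1$. Otherwise set $x^\star = \gamma + t_\varepsilon \in (0,1)$ and note that $\widehat{F}_n(x^\star)$ is an average of i.i.d.\ Bernoulli variables $\ind(x_i \le x^\star)$ with common mean $x^\star = \gamma + t_\varepsilon$. I would then apply the lower-tail half of \Cref{lem-Bernstein} (the bound on $\bar p - \frac1n\sum_i x_i$) with reference point $\gamma'$ and slack $r = |\gamma - \gamma'| + t_\varepsilon$; the hypothesis $|\gamma' - x^\star| \le r$ follows from the triangle inequality, and $r > 0$. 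The conclusion gives that $\widehat{F}_n(x^\star)$ exceeds $(\gamma + t_\varepsilon) - \varepsilon r - \sqrt{2\gamma'(1-\gamma')L/n} - (2/3 + 1/(2\varepsilon))L/n$, and the displayed identity for $t_\varepsilon$ is precisely what makes this quantity equal $\gamma$. Hence $\widehat{F}_n(x^\star) > \gamma$ with probability at least $1-\delta$, so $\Qright_\gamma(\widehat{F}_n) = \inf\{x : \widehat{F}_n(x) > \gamma\} \le x^\star = \gamma + t_\varepsilon$.

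The left-quantile bound is symmetric but needs one extra bit of care with strict versus non-strict inequalities. If $\gamma - t_\varepsilon \le 0$ the claim is trivial since $\Qleft_\gamma(\widehat{F}_n) \ge 0$; otherwise set $x^\star = \gamma - t_\varepsilon \in (0,1)$ and observe that for every $x < x^\star$ we have $\widehat{F}_n(x) \le \frac1n\sum_{i=1}^n \ind(x_i < x^\star)$, where the right-hand side is an average of i.i.d.\ Bernoulli variables with mean $\PP(x_i < x^\star) = x^\star = \gamma - t_\varepsilon$ (here continuity of the $\Unif(0,1)$ law is used). Applying the upper-tail half of \Cref{lem-Bernstein} with the same slack $r = |\gamma - \gamma'| + t_\varepsilon$ and reference point $\gamma'$, and again invoking the identity for $t_\varepsilon$, shows $\frac1n\sum_i \ind(x_i < x^\star) < \gamma$ with probability at least $1-\delta$; consequently $\widehat{F}_n(x) < \gamma$ for all $x < x^\star$, so $\Qleft_\gamma(\widehat{F}_n) = \inf\{x : \widehat{F}_n(x) \ge \gamma\} \ge x^\star = \gamma - t_\varepsilon$.

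Most of this is routine bookkeeping. The points that actually need attention are: (i) in the left-quantile argument, replacing $\widehat{F}_n(x)$ for $x < x^\star$ by the Bernoulli average $\frac1n\sum_i \ind(x_i < x^\star)$ with mean \emph{exactly} $\gamma - t_\varepsilon$ (rather than working with $\widehat{F}_n(x^\star)$), which is what anchors the threshold; (ii) checking the algebraic cancellation that collapses the Bernstein tail bound exactly onto the threshold $\gamma$ after the substitutions $x^\star = \gamma \pm t_\varepsilon$ and $r = |\gamma - \gamma'| + t_\varepsilon$; and (iii) the degenerate cases $\gamma \pm t_\varepsilon \notin (0,1)$. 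I do not expect any of these to be a genuine obstacle.
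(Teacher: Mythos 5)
Your proposal is correct and follows essentially the same route as the paper's proof: both reduce the quantile bound to \Cref{lem-Bernstein} applied to the indicator averages at the shifted thresholds $\gamma\pm t$, with reference point $q=\gamma'$ and slack $r=|\gamma-\gamma'|+t$, and then solve $t\ge T(\varepsilon)$ where $T(\varepsilon)$ is the expression inside the minimum. Your explicit treatment of the degenerate cases $\gamma\pm t_\varepsilon\notin(0,1)$ (which the paper glosses over) and your use of strict indicators in the left-tail step are minor refinements of the same argument.
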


\begin{proof}[\bf Proof of \Cref{lem-Beta-concentration}]
Define $q^- = \Qleft_{\gamma} ( \frac{1}{n}  \sum_{i=1}^{n} \delta_{ x_i } )$ and $q^+ = \Qright_{\gamma} ( \frac{1}{n}  \sum_{i=1}^{n} \delta_{ x_i } )$. Let 
\[
T(\varepsilon) = (1 - \varepsilon)^{-1} \bigg[ \varepsilon |\gamma - \gamma'| +  \sqrt{ \frac{ 2 \gamma' ( 1 - \gamma' ) \log (2 / \delta) }{ n } } + \bigg( \frac{2}{3} + \frac{1}{2 \varepsilon} \bigg) \frac{\log ( 2 / \delta ) }{ n } \bigg],\quad\forall \varepsilon\in(0,1)
\]
be the function inside the $\min$ operator. For any $t > 0$, we have
\begin{align*}
&  q^+ - \gamma > t 
\quad\Rightarrow\quad
\frac{1}{n}  \sum_{i=1}^{n} \ind  (  x_i \leq  \gamma + t ) \le \gamma 
\quad\Leftrightarrow\quad
(\gamma + t) - \frac{1}{n}  \sum_{i=1}^{n} \ind  (  x_i \leq  \gamma + t )  \ge t ,
%\label{eqn-lem-quantile-concentration-1}
\\
& q^- - \gamma < - t 
\quad\Rightarrow\quad
\frac{1}{n}  \sum_{i=1}^{n} \ind  (  x_i \leq  \gamma - t ) \geq \gamma 
\quad\Leftrightarrow\quad
\frac{1}{n}  \sum_{i=1}^{n} \ind  (  x_i \leq  \gamma - t ) - (\gamma - t) \geq t .
%\label{eqn-lem-quantile-concentration-2}
\end{align*}
Based on the above,
%\begin{align*}
%\PP ( | \widehat{q} - \gamma | > t ) \leq \PP \bigg(
%(\gamma + t) - \frac{1}{n}  \sum_{i=1}^{n} \ind  (  x_i \leq  \gamma + t )  > t 
%\text{ or }
%\frac{1}{n}  \sum_{i=1}^{n} \ind  (  x_i \leq  \gamma - t ) - (\gamma - t) \geq t 
%\bigg)
%\end{align*}
%and
%\begin{multline*}
%	\PP ( | \widehat{q} - \gamma | \leq t ) \\ 
%	\geq
%	\PP \bigg(
%	(\gamma + t) - \frac{1}{n}  \sum_{i=1}^{n} \ind  (  x_i \leq  \gamma + t )  \leq t 
%	\text{ and }
%	\frac{1}{n}  \sum_{i=1}^{n} \ind  (  x_i \leq  \gamma - t ) - (\gamma - t) < t 
%	\bigg).
%\end{multline*}
\begin{align*}
& \PP (  q^+ - \gamma  \le t ) \ge \PP \bigg(
	(\gamma + t) - \frac{1}{n}  \sum_{i=1}^{n} \ind  (  x_i \leq  \gamma + t )  < t \bigg), \\
& \PP (  \gamma - q^-   \le t ) \ge \PP \bigg(
	\frac{1}{n}  \sum_{i=1}^{n} \ind  (  x_i \leq  \gamma - t ) - (\gamma - t) < t 
	\bigg).
\end{align*}

Choose any $\delta \in (0, 1)$ and $\varepsilon > 0$. Applying \Cref{lem-Bernstein} to $\{ \ind  (  x_i \leq  \gamma \pm t )   \}_{i=1}^n$, $p_i =  \gamma \pm t$ and $q = \gamma'$ yields that for every $\varepsilon>0$, each of the following inequalities holds with probability at least $1 - \delta$:
\begin{align*}
& (\gamma + t) - \frac{1}{n}  \sum_{i=1}^{n} \ind  (  x_i \leq  \gamma + t ) < 
\varepsilon t + \varepsilon |\gamma - \gamma'| + \sqrt{ \frac{ 2 \gamma' ( 1 - \gamma' ) \log (1 / \delta) }{ n } } + \bigg( \frac{2}{3} + \frac{1}{2 \varepsilon} \bigg) \frac{\log ( 1 / \delta ) }{ n }, \\[4pt]
& \frac{1}{n}  \sum_{i=1}^{n} \ind  (  x_i \leq  \gamma - t ) - (\gamma - t) < 
\varepsilon t + \varepsilon |\gamma - \gamma'| +  \sqrt{ \frac{ 2 \gamma' ( 1 - \gamma' ) \log (1 / \delta) }{ n } } + \bigg( \frac{2}{3} + \frac{1}{2 \varepsilon} \bigg) \frac{\log ( 1 / \delta ) }{ n }.
\end{align*}
Therefore, we would have $\PP (  q^+ - \gamma  \leq t ) \geq 1 - \delta $ and $\PP (  \gamma - q^- \leq t ) \geq 1 - \delta $ provided that
\begin{align*}
t \geq	\varepsilon t + \varepsilon|\gamma - \gamma'| + \sqrt{ \frac{ 2 \gamma ( 1 - \gamma ) \log (1 / \delta) }{ n } } + \bigg( \frac{2}{3} + \frac{1}{2 \varepsilon} \bigg) \frac{\log ( 1 / \delta ) }{ n } ,
\end{align*}
which is equivalent to $t \geq T(\varepsilon)$. Since the function $T$ is continuous on $(0,1)$ and $\lim_{\varepsilon \to 0^+} T(\varepsilon) = \lim_{\varepsilon \to 1^-} T(\varepsilon) = +\infty$, then $T$ attains its minimum value at some $\varepsilon_0\in(0,1)$. The proof is finished by taking $t = T(\varepsilon_0)$.
\end{proof}

Based on \Cref{lem-Beta-concentration}, we can study the concentration of empirical quantile under distribution shift. To state the result, we define a useful function that appears repeatedly in our proofs as probabilistic upper bounds.
\begin{definition}\label{defn-UB}
	For any $n \in \ZZ_+$, $\varepsilon > 0$ and $\gamma, \delta \in (0, 1)$, let
	\begin{align*}
		\UB (n, \varepsilon, \gamma, \delta) =   \sqrt{ \frac{ 2 \gamma ( 1 - \gamma ) \log (1 / \delta) }{ n } } + \bigg( \frac{2}{3} + \frac{1}{2 \varepsilon} \bigg) \frac{\log (1 / \delta ) }{ n } .
	\end{align*}
\end{definition}

\begin{lemma}[Concentration of empirical quantile under distribution shift]\label{lem-quantile}
Let $\{ x_i \}_{i=1}^n$ be independent random variables with absolutely continuous CDF's $\{ \cdf_i \}_{i=1}^n$, and denote by $\cdfhat$ their empirical CDF. Define $\phi = \max_{i\in[n]} \| \cdf_i - \cdf_n \|_{\infty}$. For any $\gamma, \delta \in (0, 1)$, we have
\begin{align*}
& \PP \left( \cdf_n \big( \Qright_{\gamma} ( \cdfhat ) \big) - \gamma \leq \phi + \min_{\varepsilon > 0} \bigg\{
\frac{
	\UB ( n, \varepsilon, \gamma, \delta) 
}{
	1 - \varepsilon
} \bigg\} \right) \geq 1 - \delta .
\end{align*}
The same bound also holds for $\gamma - \cdf_n \big( \Qleft_{\gamma} (\cdfhat) \big)$.
\end{lemma}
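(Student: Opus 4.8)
The plan is to reduce the statement to the i.i.d.\ uniform case already handled by \Cref{lem-Beta-concentration}, absorbing the distributional mismatch between the $F_i$'s and $F_n$ into the bias term $\phi$ by means of the perturbation inequality \Cref{lem-max-diff-order-stats}. First, I would use monotonicity and continuity of $F_n$ to transport the CDF inside the empirical quantile: for any reals $a_1,\dots,a_n$,
\[
F_n\Bigl(\Qright_\gamma\bigl(\tfrac1n{\textstyle\sum_{i=1}^n}\delta_{a_i}\bigr)\Bigr)=\Qright_\gamma\Bigl(\tfrac1n{\textstyle\sum_{i=1}^n}\delta_{F_n(a_i)}\Bigr),
\]
with the analogous identity for $\Qleft$ --- the same elementary fact invoked in the proof of \Cref{thm-GL}. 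Taking $a_i=x_i$, it then suffices to control the $\gamma$-quantiles of the empirical measure $\tfrac1n\sum_{i=1}^n\delta_{F_n(x_i)}$.

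Next, I would compare $\{F_n(x_i)\}_{i=1}^n$ with the uniformized sample $\{F_i(x_i)\}_{i=1}^n$. Since each $F_i$ is continuous, $F_i(x_i)\sim\Unif(0,1)$, and by independence of the $x_i$'s these are i.i.d.\ $\Unif(0,1)$. Applying \Cref{lem-max-diff-order-stats} to the data vectors $(F_n(x_i))_{i}$ and $(F_i(x_i))_{i}$ bounds the discrepancy of their $\gamma$-quantiles (left or right) by $\max_{i\in[n]}|F_n(x_i)-F_i(x_i)|\le\max_{i\in[n]}\|F_n-F_i\|_\infty=\phi$. Combined with the previous step, this gives, deterministically, $F_n\bigl(\Qright_\gamma(\cdfhat)\bigr)\le\Qright_\gamma\bigl(\tfrac1n\sum_i\delta_{F_i(x_i)}\bigr)+\phi$ and, symmetrically, $F_n\bigl(\Qleft_\gamma(\cdfhat)\bigr)\ge\Qleft_\gamma\bigl(\tfrac1n\sum_i\delta_{F_i(x_i)}\bigr)-\phi$.

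Finally, I would invoke \Cref{lem-Beta-concentration} for the i.i.d.\ uniforms $\{F_i(x_i)\}_{i=1}^n$ with $\gamma'=\gamma$, so the cross term $\varepsilon|\gamma-\gamma'|$ drops out and the bound there becomes $\min_{\varepsilon>0}\{(1-\varepsilon)^{-1}\UB(n,\varepsilon,\gamma,\delta)\}$; this holds with probability at least $1-\delta$ for $\Qright_\gamma(\tfrac1n\sum_i\delta_{F_i(x_i)})-\gamma$, and the same bound holds with probability at least $1-\delta$ for $\gamma-\Qleft_\gamma(\tfrac1n\sum_i\delta_{F_i(x_i)})$. Adding $\phi$ and combining with the two deterministic inequalities from the previous step yields both claims of the lemma.

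The one point that needs care is the monotone-transformation identity in the first step: it relies on continuity of $F_n$ to guarantee that $F_n$ maps the relevant order statistic of $(x_i)$ onto exactly the corresponding order statistic of $(F_n(x_i))$, with no jump created at level $\gamma$, and one should check that the left-quantile version is genuinely symmetric under this operation. Everything else --- the probability integral transform and the two cited lemmas --- is routine, so I do not anticipate further obstacles.
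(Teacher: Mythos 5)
Your proposal is correct and follows essentially the same route as the paper's own proof: the identity $F_n\bigl(\Qright_{\gamma}(\cdfhat)\bigr)=\Qright_{\gamma}\bigl(\tfrac{1}{n}\sum_{i}\delta_{F_n(x_i)}\bigr)$, the comparison of $\{F_n(x_i)\}$ with the i.i.d.\ uniforms $\{F_i(x_i)\}$ via \Cref{lem-max-diff-order-stats} to absorb the drift into $\phi$, and an application of \Cref{lem-Beta-concentration} with $\gamma'=\gamma$. No gaps.
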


\begin{proof}[\bf Proof of \Cref{lem-quantile}]
We only prove the bound on $F_n \big( \Qright_{\gamma} ( \cdfhat ) \big) - \gamma $, as the other bound follows from the same analysis. By definition,
\begin{align*}
	& F_n \big( \Qright_{\gamma} ( \cdfhat ) \big) 
	= \Qright_{\gamma} \left(
	\frac{1}{n} \sum_{i=1}^{n}  \delta_{ F_n ( x_i ) }
	\right) .
\end{align*}
By absolute continuity, $\{ \cdf_i ( x_i ) \}_{ i=1 }^n$ are i.i.d.~$\Unif(0, 1)$ random variables. Hence, the distribution of $ \Qright_{\gamma}  (
\frac{1}{n} \sum_{i=1}^{n} \delta_{ F_i ( x_i ) }
)$ does not depend on $\{ F_i \}_{i=1}^n$. It can be viewed as a \emph{homogenized} version of $F_n \big( \Qright_{\gamma} ( \cdfhat ) \big) $. For any $\delta \in (0, 1)$, \Cref{lem-Beta-concentration} implies that with probability at least $1 - \delta$,
\begin{align}
\PP \left(
	\Qright_{\gamma} \bigg(
	\frac{1}{n} \sum_{i=1}^{n}  \delta_{ \cdf_i ( x_i ) }
	\bigg)
	- \gamma
	\leq  \min_{\varepsilon > 0} \bigg\{
	\frac{
		\UB ( n, \varepsilon, \gamma, \delta) 
	}{
		1 - \varepsilon
	} \bigg\} 
\right) \geq 1 - \delta.
	\label{eqn-homogenization}
\end{align}

Since $| F_n ( x_i ) - F_i ( x_i ) | \leq \| F_n - F_i \|_{\infty} \leq \phi$ a.s., then applying \Cref{lem-max-diff-order-stats} to $\{ F_n ( x_{i} ) \}_{i=1}^n$ and $\{ F_i ( x_{i} ) \}_{i=1}^n$ gives
\begin{align*}
	\bigg|
	F_n \big( \Qright_{\gamma} ( \cdfhat ) \big)
	-  \Qright_{\gamma} \bigg(
	\frac{1}{n} \sum_{i=1}^{n}  \delta_{ F_i ( x_i ) }
	\bigg)
	\bigg|
	\leq \phi  ,
	\qquad\text{a.s.}
\end{align*}
Combining this with \eqref{eqn-homogenization} completes the proof.
\end{proof}

\section{Numerical experiments: additional details}\label{sec-experiments-details}

\subsection{A simplified algorithm for quantile estimation}\label{sec-algo-refined}

\begin{algorithm}[H]
	\begin{algorithmic}
		\STATE {\bf Input:} Datasets $\{ \dataset_j \}_{j=1}^t$ with $\dataset_j = \{ u_{j, i} \}_{i=1}^{B_j}$, miscoverage level $\alpha$ and hyperparameter $\delta'$.
		\STATE Let $m = \lceil\log_2 t\rceil + 1$. Let $k_s = 2^{s-1}$ for $s\in[m-1]$ and $k_m = t$.
		%\STATE Let $\delta' = \delta/(4t)$.
		\FOR{$s=1,...,m$}
		\STATE Compute $\widehat{q}_{t,k_s}$ according to \eqref{eqn-empirical-quantile}, and		
		\begin{align*}
			& \psi (t, k_s, \delta) = \sqrt{ \frac{\alpha ( 1 - \alpha) \log (1 / \delta)}{B_{t, k_s}} } + \frac{1}{B_{t, k_s}}, \\[4pt]
			&	\widehat{\phi}(t,k_s,\delta)
			= \frac{5}{12} \max\limits_{i\in[k]} \bigg( \big| \widehat{F}_{t,k_i}(\widehat{q}_{t,k_s}) - (1-\alpha) \big| - [   \psi (t,k_s,\delta) +  \psi (t,k_i,\delta) ]    \bigg)_+.
		\end{align*}
		
		\ENDFOR
		\STATE Choose any $	\widehat{s} \in \argmin_{ s\in[m] }  \{ \widehat\phi (t, k_s, \delta)  + \psi (t , k_s , \delta)  \}$.
		\RETURN $\widehat{q}_{t,k_{\widehat{s}}}$.
		\caption{Adaptive rolling window for quantile estimation (experiment version)}
		\label{alg-quantile-experiment}
	\end{algorithmic}
\end{algorithm}

\subsection{Details of the synthetic data experiment}\label{sec-random-function}

We give an outline of how the true mean sequence $\{\mu_t\}_{t=1}^T$ in the right panel of \Cref{fig:true-means} and the sequence $\{\beta_t\}_{t=1}^T$ for non-stationary linear regression are generated. 

We will construct a base sequence $\{u_t\}_{t=1}^T$, and set $\mu_t = 5u_t$ and $\beta_t = 2u_t$. The base sequence $\{u_t\}_{t=1}^T$ consists of $4$ parts, each representing a distribution drift pattern. In the first part, the sequence experiences large shifts. Then, it switches to a sinusoidal pattern. Following that, the environment stays stationary for some time. Finally, the sequence drifts randomly at every period, where the drift sizes are independently sampled from $\{1,-1\}$ with equal probability and scaled with a constant. 

The function for generating the base sequence takes in $3$ parameters $\texttt{N}$, $\texttt{n}$ and $\texttt{seed}$, where $\texttt{N}$ is the total number of periods, $\texttt{n}$ is the parameter determining the splitting points of the $4$ parts, and $\texttt{seed}$ is the random seed used for code reproducibility. In our experiment, we set $\texttt{N = 100}$,\ $\texttt{n = 2}$ and $\texttt{seed = 2024}$. The exact function can be found in our code at \url{https://github.com/eliselyhan/predictive-inference}.

\subsection{Details of the real data experiment}\label{sec-housing-details}

\Cref{fig-housing} shows the weekly average of logarithmic prices, illustrating the distribution shift over time.

\begin{figure}[!ht]
	\vskip 0.2in
	\begin{center}
		\includegraphics[scale=0.65]{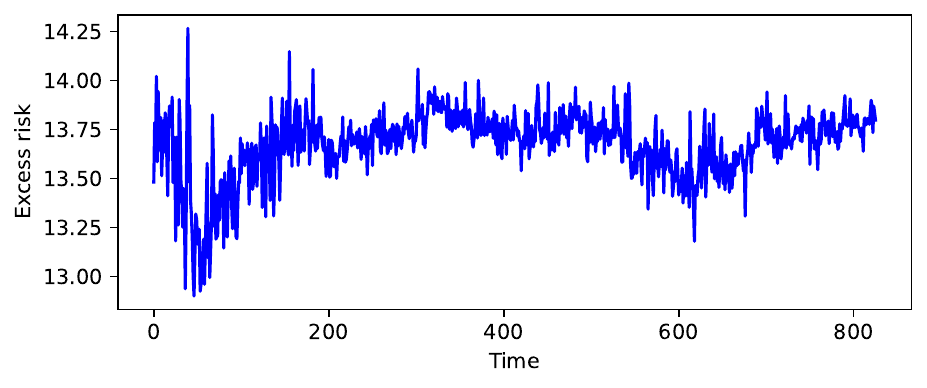}
		\caption{Weekly average of logarithmic prices.}
		\label{fig-housing}
	\end{center}
	\vskip -0.2in
\end{figure}

We focus on transactions of studios and apartments with 1 to 4 bedrooms, between January 1st, 2008 and December 31st, 2023. We import variables \texttt{instance\_date} (transaction date), \texttt{area\_name\_en} (English name of the area where the apartment is located in), \texttt{rooms\_en} (number of bedrooms), \texttt{has\_parking} (whether or not the apartment has a parking spot), \texttt{procedure\_area} (area in the apartment), \texttt{actual\_worth} (final price) from the data. 

We use \texttt{instance\_date} (transaction date) to construct weekly datasets. The target for prediction is the logarithmic of \texttt{actual\_worth}. The predictors are \texttt{area\_name\_en}, \texttt{rooms\_en}, \texttt{has\_parking} and \texttt{procedure\_area}. \texttt{area\_name\_en} has 58 possible values and encoded as an integer variable.

We remove a sample if its \texttt{actual\_worth} or \texttt{procedure\_area} is among the largest or smallest 2.5\% of the population, whichever is true. After the procedure, $91.6\%$ of the data remain.

We run XGBoost regression using the function \texttt{XGBRegressor} in the Python library \texttt{xgboost}. We set \texttt{random\_state} to be our random seed and do not change any other default parameters.

{
\bibliographystyle{ims}
\bibliography{bib}
}

\end{document}